\newcolumntype{Z}{>{\Centering\arraybackslash}L} 
\newcolumntype{P}[1]{>{\centering\arraybackslash}p{#1}}
\newcolumntype{M}[1]{>{\centering\arraybackslash}m{#1}}
\newcommand{\disablewr}[1]{#1}
\renewcommand{\disablewr}[1]{}%
\DeclareMathOperator{\diag}{\mathrm{diag}\,}
\newcommand{\mt}[1]{\mathbf{#1}}
\DeclareMathOperator{\T}{^\mathsf{T}\,}
\DeclareMathOperator{\lbar}{\bar{\lambda}\,}
\newcommand\scalemath[2]{\scalebox{#1}{\mbox{\ensuremath{\displaystyle #2}}}}
\def\E{\mathbb{E}}
\def\P{\mathbb{P}}
\def\Nats{\mathbb{N}}
\newcommand{\mst}{\;\mathrm{ s.t. }}
\newcommand{\ind}[1]{\mathbf{1}_{#1}}
\newtheorem{theorem}{Theorem}[section]
\newtheorem{lemma}[theorem]{Lemma}
\begin{document}

%
\title{A Palm Calculus Approach to the Distribution of the Age of Information}
%
%
%

\author{
Amr Rizk,~\IEEEmembership{Senior Member,~IEEE,} and
Jean-Yves Le Boudec,~\IEEEmembership{Fellow,~IEEE}
\IEEEcompsocitemizethanks{
\IEEEcompsocthanksitem \hspace{-13pt} A. Rizk is with the University of Duisburg-Essen, Essen, Germany. (e-mail: amr.rizk@uni-due.de).\newline
\IEEEcompsocthanksitem Jean-Yves Le Boudec is with the École Polytechnique Fédérale de Lausanne (EPFL), 1015 Lausanne, Switzerland (e-mail: jean-yves.leboudec@epfl.ch).
}
}

%
%

%

\IEEEoverridecommandlockouts



\maketitle

\begin{abstract}
A key metric to express the timeliness of status updates in latency-sensitive networked systems is the age of information (AoI), i.e., the time elapsed since the generation of the last received informative status message.
This metric allows studying a number of applications including updates of sensory and control information in cyber-physical systems and vehicular networks as well as, job and resource allocation in cloud clusters.
State-of-the-art approaches to analyzing the AoI rely on queueing models that are composed of one or many queuing systems endowed with service order, e.g., FIFO, LIFO, or last-generated-first-out order.
A major difficulty arising in these analysis methods is capturing the AoI under message reordering when the delivery is non-preemptive and non-FIFO, i.e., when messages can overtake each other and the reception of informative messages may obsolete some messages that are underway.
In this paper, we derive an exact formulation for the distribution of AoI in non-preemptive, non-FIFO systems where the main ingredients of our analysis are Palm calculus and time inversion.
Owing to the rationality of the Laplace-Stieltjes transforms that are used in our approach, we obtain computable exact expressions for the distribution of AoI.

\end{abstract}


\section{Introduction}

Cyber-physical systems (CPS) constitute a type of hybrid system hat combines physical processes and computation~\cite{Lee08:CPS}.
Often, the considered physical process such as those arising in chemical plants or platoons of automated vehicles are controlled via feedback loops.
Due to this sensor-computation-actuator feedback loop, CPS are characterized by the mutual interaction of the physical process, the software computation, and essentially, the network.

While CPS encompass diverse key interactions worth accurate modeling such as control correctness and concurrency we are concerned in the following with the effort of characterizing the timeliness of sensor data when received at the controller.
This is the first step to ensure that the actions taken by the controller and hence executed by the actuator are based on fresh information.
A key metric to express this timeliness of sensor data at the controller is the Age of Information (AoI)~\cite{Kaul12}, which has recently been a vivid object of study~\cite{Yates:MDS:TINT20,Bedewy17}.
AoI is a semi-continuous function that denotes the age of the sensor (sender) status at the controller (receiver).
The status age is hence best described by a \emph{jump-and-drift} process that grows linearly with time and jumps downwards at the time points when informative messages arrive at the receiver.
An informative message is defined as a message containing an update that was generated after the generation time point of the last received update at the controller.
Now, the time points at which messages arrive at the receiver, as well as the timestamps contained in these messages, are random and essentially dependent on the generation and transmission of messages at the sender and at every network node on the path from the sender to the receiver.
Figure~\ref{fig:age_sample_path} shows a sketch of this scenario where a newer message\footnote{in terms of the generation time point} ($m_2$)
overtakes an older message ($m_1$).
Note that the age at the receiver does not jump downwards at the reception of the outdated message $m_1$.

One research direction to optimize the timeliness of sensor information in CPS is through advancing the state-of-the-art physical layer techniques such as deterministically reserved transmission time slots over all available frequencies as in low-latency 5G network slices known as URLLC~\cite{Popovski:19}.
While this eliminates contention on the wireless link in 5G, data packet interactions and sporadic network congestion still occur on the end-to-end path between the sensors and the controller.

Research on the topic of AoI has been characterized by the analysis of mathematical models that capture the stochastic process of the age at the receiver given a combination of ingredients, i.e., (i) the process of data generation and transmission at the source, (ii) a model of the network interactions such as traffic scheduling and the variability of the link transmission rate.
The prevalent approach in many works on AoI is to capture these ingredients in form of a queueing system (or a series thereof) that naturally capture the former and models the latter through the service process.
In many works the arrival process is often considered as a Poisson process for tractability~\cite{Kam13,Kam16:ToIT} or as a periodic process to capture simple sensor device implementations~\cite{Fidler21:AoI}.
The variety of queueing models ranges from simple M/M/1 queues with FIFO service to preemptive Last-Generated, First-Served (LGFS) systems.
A remarkable difficulty of some AoI models is due to the lack of FIFO service.
Allowing messages to overtake each other leads to considerable complexity as shown in the basic example in Figure~\ref{fig:age_sample_path}.
A  direct approach to model AoI systems with non-FIFO service is presented in \cite{Yates:multiple_source:TINT19} using the Stochastic-Hybrid-System (SHS) technique.
This technique essentially depends on the Fokker-Planck partial differential equation (PDE) satisfied by the time-dependent probability density of the AoI as shown in \cite{Yates:MDS:TINT20} and quickly becomes intractable.
For a comprehensive overview we point the reader to \cite{YatesSBKMU21a}.

The \textbf{key differences} of this work to the works in \cite{Kam16:ToIT,Yates18,YatesSBKMU21a} are: (i) Instead of considering the time variant PDE of the density of the age we reduce the problem to a simple model where we are interested in the stationary distribution of the age in a system where message overtaking is allowed. (ii) We obtain the distribution of the age using Palm calculus and time inversion where we essentially require to know the joint distribution of the age when a message arrives and the time until the next message arrives. Note that the obtained inversion formula applies to all types of arrivals in this model but we apply it to the informative messages to obtain the age density at the receiver. This model naturally captures the distribution of functions of the age. (iii) Due to the used mathematical tools our results only require \emph{stationarity} of the underlying queuing model, which is a Markov process on a discrete state space and can thus be analyzed with elementary techniques. (iv) The model considered in this paper is different from \cite{Kam16:ToIT,Yates18} as we consider a window flow controlled sender that injects at most a fixed number of non-obsolete messages into the network channel. We denote this model as $M/M/I_{\max}/I_{\max}^*$.

Our contributions in this paper are summarized as follows:
\begin{itemize}
  \item We use Palm calculus and time inversion to derive the probability distribution of the age of information in a stationary $M/M/I_{\max}/I_{\max}^*$  system.
  \item We calculate the Laplace-Stieltjes transform of the distribution of the age at the arrival time points of informative messages as well as at any point in time.
\end{itemize} 
\vspace{-10pt}
\section{Problem Statement and System Model}
\label{sec:system_model}
\label{ch:problem}
\begin{figure}
  \centering
  \includegraphics[width=\linewidth]{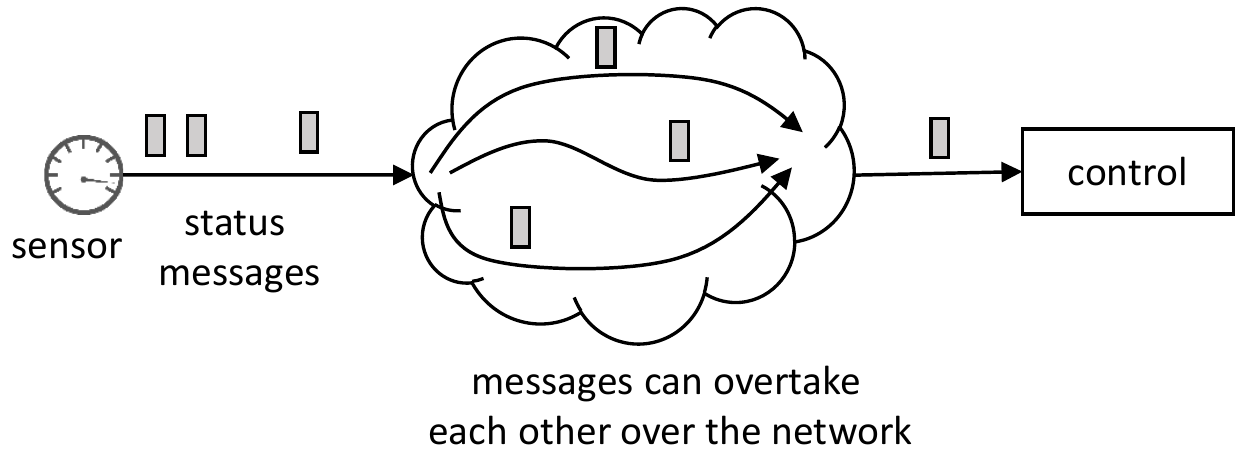}
  \caption{The sensory information is generated and immediately transmitted in form of messages. These can overtake each other on the network. Informative messages keep the total message order at the receiver and reduce the age of the status information at the receiver to their respective one way delay.}
  \label{fig:cps_scenario}
  \vspace{-20pt}
\end{figure}
We consider cyber-physical systems as depicted in Fig.~\ref{fig:cps_scenario} where sensors transmit status updates to a central control and data acquisition function.
We assume that timestamped messages are transmitted at the sender according to some parameterized stochastic process. When a message is generated, it obsoletes any previous message.
However, every message is subject to a one way delay and \emph{messages can overtake each other}. We say that a message is ``informative" if its timestamp was generated after the generation time of all messages received so far. When a non informative message arrives, it is of no use and is discarded.

We are interested in the age of information at the receiver, $X_t$, which is formally defined as follows. Timestamped messages are generated at times $\{\tau_i\}$ and received at times $\{\tau'_i\}$ respectively (with $\tau_i\leq \tau'_i$ ). Then
\begin{equation}
X_t= t -\max_{i: \tau'_i\leq t} \tau_i
\end{equation}
The dynamic evolution of $X_t$ is such that
the age $X_t$ increases at rate $1$ between arrival events; furthermore, when message $i$ arrives, the value of $X_t$ just after the arrival, namely $X_{\tau_i^+}$, is set to  $\min\left(\tau'_i-\tau_i\;, X_{\tau_i^-}\right)$ as seen in Figure~\ref{fig:age_sample_path}.


\begin{figure}
  \centering
  \includegraphics[width=\linewidth]{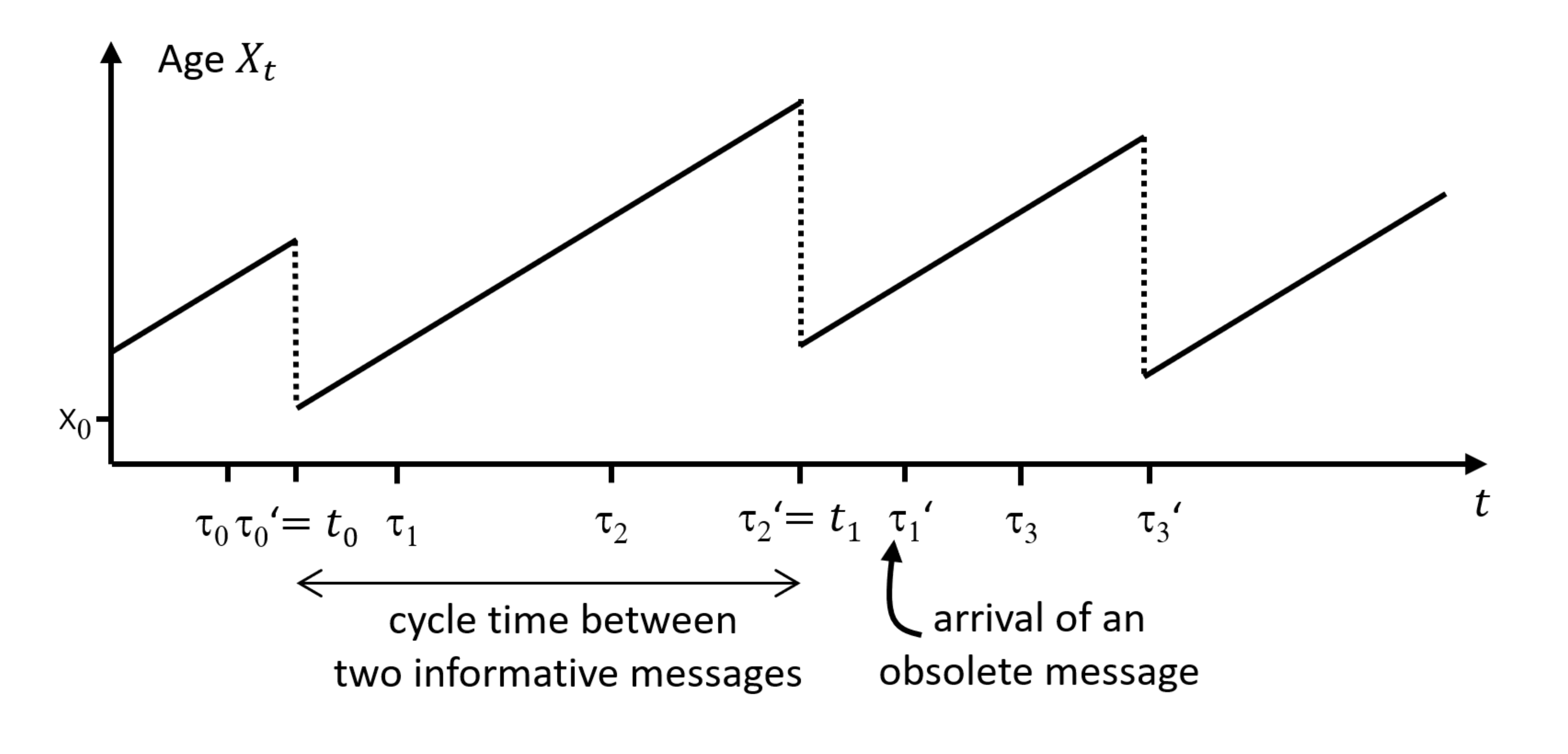}
  \caption{The age process $X_t$. Message $m_i$ is emitted at time $\tau_i$ and received at time $\tau'_i$. Observe that message $m_1$ is overtaken by message $m_2$ hence the age process at the receiver does not change when $m_1$ arrives.
  }
  \label{fig:age_sample_path}
  \vspace{-30pt}
\end{figure}

We assume that messages are generated according to a Poisson process of rate $\lambda$. The channel is modelled as a number of independent parallel servers each serving at most one message at a time with exponentially distributed service times, i.e. the random variables $\tau'_i-\tau_i$ are independent of each other and of the arrival process  $\{\tau_i\}$ and they are exponentially distributed with same parameter $\mu$.
Furthermore, in order to not overwhelm the channel the sender is window-flow-controlled and allows only a fixed number of outstanding informative messages $I_{\max}$ in the channel: arriving messages are dropped if the number of outstanding informative messages is equal to $I_{\max}$.  
We assume that the sender knows the number of informative messages in the channel (presumably via some instantaneous reverse channel).
%
We use the notation $M/M/I_{\max}/I_{\max}^*$ for this queueing system, where the $*$ here means that the departure of a message flushes all older messages out of the system.

In this paper we are interested in the stationary distribution of the age $X_t$ given the
process parameters $\lambda,\mu$ and $I_{\max}$.

The global notation used in the paper is recalled in Table~\ref{tab-nl}.
\begin{table}[tb]
  \caption{Notation List}
  \label{tab-nl}

  \centering
\begin{tabular}{|cp{6cm}|}
\hline
$\tilde{d}_n$, $d'_n$ & $\tilde{d}_n=\sum_{n'}Q_{n,n'}$, $d'_n=\sum_{n'}Q'_{n,n'}$\\
$f(x)$ & PDF of age of information at received at an arbitrary point in time;\\
$f^\circ(x_0,t_1)$ & Joint PDF of age of information $x_0$ and time to wait until next delivery of informative message, sampled when an informative message arrives at receiver;\\
  $f^\circ_A(x_0)$ & PDF of age of information sampled when an informative message arrives at receiver;\\
  $f_{n',n}$ & Laplace-Stieltjes Transform of $x_0\mapsto  g^\circ(x_0|n',n)$\\
  $\tilde{f}_n$ & Laplace-Stieltjes Transform of $t_1\mapsto  h(t_1|n)$\\
  $g^\circ(x_0|n',n)$ &PDF of the age $x_0$ just after an informative message arrival given that the state of the Markov chain is $n'$ just before the arrival of the informative message and $n$ just after the arrival;\\
  $h(t_1|n)$ & PDF of the time that will elapse from time $t$ until the next informative message arrives,  given that $Z_t=n$;\\
  $I_{\max}$ & Maximum number of messages in transit; messages generated when $Z_t=I_{\max}$ are discarded;\\
  $\lambda$ & Rate of generation of messages;\\
  $\mu$ & Message transit time is exponential with rate $\mu$;\\
  $\bar{N}$ &=$\sum_{i=1}^{I_{\max}}i p_{i}$\\
  $p_n$ & Stationary probability of $Z_t$\\
  $p^\circ_{n',n}$ & Probability that an arbitrary informative message arrival
happens at a transition $(n'\to n)$ of the Markov chain $Z_t$\\
$Q_{i,j}, Q'_{i,j}$ & Rate of transition of $Z_t$ [resp. $Z^r_t$] from state $i$ to state $j$\\
  $X_t$ & Age of information at receiver at time $t$;\\
  $Z_t$ & Number of messages in transit at time $t$;\\
  $Z^r_t$ & Time-reversed process derived from $Z_t$\\
  \hline
\end{tabular}
\vspace{-30pt}
\end{table}

\vspace{-20pt}
\section{A Palm Calculus approach to the AoI}
\label{sec:palm_model}

\subsection{The Underlying Queuing Model}
\label{sec:underlying_model}
First we consider a continuous time Markov jump process $\{Z_t\}_{t\ge0}$ 
that models the $M/M/I_{\max}/I_{\max}^*$ queue described in the previous section. Let $Z_t$ represent the number of messages underway from the sender to the receiver, for $t\in \mathbb{R}^+$, with $Z_t\in E=\{0,1, ..., I_{\max}\}$.
Recall that, by our modelling assumption, this counts
only informative messages. At any time $t$ such that $Z_t=n>0$, and for $i \in\{1, ..., n\}$ we call $i$th message, the message with the $i$th smallest timestamp among all messages present in the channel.

When the sender generates a new message at time $t$ (which occurs at constant rate $\lambda$), if $Z_{t^-}<I_{\max}$ then the message is accepted in the channel and $Z_t$ is incremented by $1$, i.e. $Z_{t^+}=Z_{t^-}+1$; else, i.e. if $Z_{t^-}=I_{\max}$, the message is discarded and $Z_t$ is unchanged.

Consider now message departures from the channel. Whenever $Z_t=n>0$ all $n$ messages in the channel can leave the channel with same rate $\mu$, thus the rate of message departure is $n\mu$ and all messages are equally likely to leave the channel. Assume that a departure occurs at time $t$ and $Z_{t^-}=n>0$. For $i\in \{1...n\}$, the probability that the departing message is the $i$th message is $\frac1n$. In this case, $Z_t$ is decremented by $i$, i.e. $Z_{t^+}=Z_{t^-}-i$; in other words, the transition $n\to n-i$ occurs at rate $\mu$ for every $i\in \{1...n\}$.

Thus $Z_t$ is a continuous-time Markov chain with finite state space $E$ and with transition rates (Fig.~\ref{fig:MC_fwd}):
\vspace{-10pt}
\begin{eqnarray}
Q_{i,i+1}&=\lambda, & i=0...I_{\max}-1\nonumber\\
Q_{i,j}&=\mu, & i=1...I_{\max}, 0\leq j \leq i-1\nonumber\\
Q_{i,j}&=0,& \mbox{otherwise.}
\label{eq:transition_rates_fwd_process}
\end{eqnarray}
\vspace{-10pt}

Observe that $Z_t$ can be regarded as the number of messages in a FIFO queue with Poisson arrivals of rate $\lambda$ and drained using a batch service process.  It is ergodic as the state space is finite and fully connected. 
%
%
%

Using the balance equation, the steady state probabilities $p_n$ can be computed and are given by:
\begin{equation}
p_n = \begin{cases}
\frac{(n+1) \lambda^n\mu}{\prod\limits_{j=1}^{n+1}\left(\lambda + j\mu\right)} &\text{for $0\leq n < I_{\max}$}\\ \\
\frac{\lambda^n}{\prod\limits_{j=1}^{n}\left(\lambda + j\mu\right)}&\text{for $n = I_{\max}$}
\end{cases}
\label{eq:steady_state_prob_chain_nr_msgs_underway}
\end{equation}
The derivation of \eqref{eq:steady_state_prob_chain_nr_msgs_underway} is given in appendix~\ref{sec:appendix_proof_steady_state_prob_fwd}.

\begin{figure}
  \centering
  \includegraphics[width=\linewidth]{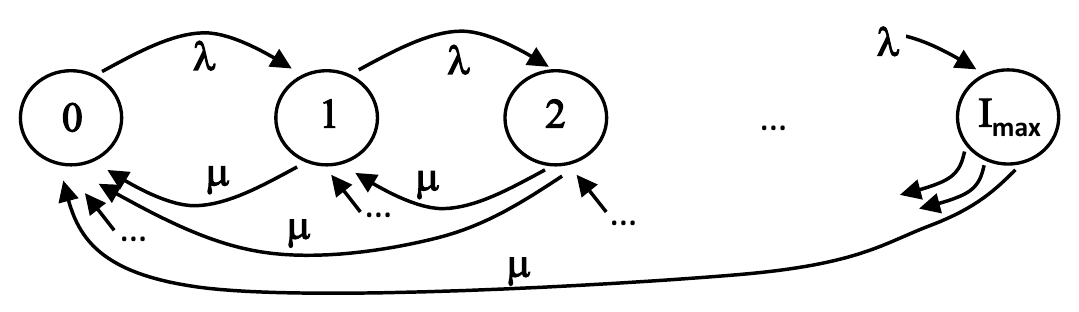}
  \caption{State transition diagram of the Markov chain $Z_t$ representing the number of non-obsolete messages underway.}
  \label{fig:MC_fwd}
\end{figure}
\subsection{A Palm Calculus Approach}
In the following we use Palm Calculus to compute the stationary distribution of age. To this end, we assume that the continuous time Markov chain $Z_t$ is in its unique stationary regime, which, since it is ergodic, occurs in practice if the system has been operating for a long time. With Palm calculus, we are able to relate the stationary distribution of the age to quantities that are computed for the Markov chain $Z_t$.

Palm calculus \cite{baccelli2012palm,serfozo2009basics,le2010performance} applies to a stationary point process $T_n$ ($n \in \mathbb{Z}$) and an observable (random)
process $X_t$ ($t\in\mathbb{R}$) that are jointly stationary. Here we take for $T_n$ the sequence of times at which a departure
occurs from the $M/M/I_{\max}/I_{\max}^*$ queue $Z_t$ (i.e. when $Z_t$ is decremented, which also corresponds to arrivals of informative messages at the receiver). Since we assume $Z_t$ is in its stationary
regime, this point process is also stationary. In the context of Palm calculus, it is customary to assume that
the numbering convention is such that $T_0\leq 0 < T_1$. The observable $X_t$ is the age of information at the receiver, as defined
earlier.
Note that $X_t$ can be computed in a deterministic way from the trajectory $Z_{(-\infty,t]}$ and is invariant with respect to change of time
origins, therefore it is jointly stationary with $Z_t$, hence with $T_n$ \cite[Section 7.2.1]{le2010performance}. Also note that Palm calculus does not require the point process to be Poisson (the arrival process is Poisson by definition, but it can be seen that the departure process is not).

\begin{figure}
  \centering
  \includegraphics[width=\linewidth]{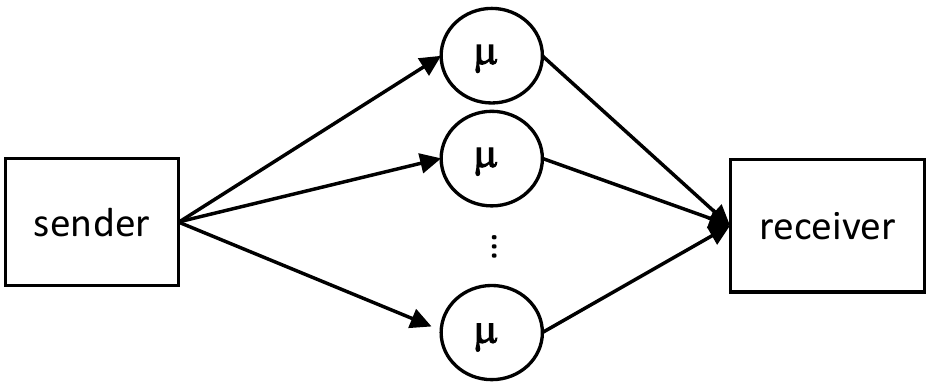}
  \caption{Our system model assumes that messages are transmitted upon generation and take random iid one way delay to reach the receiver. Hence the system model assumes for every message an independent channel each with exponentially distributed service time with identical parameter $\mu$. We assume a network channel (as sketched in Fig.~\ref{fig:cps_scenario}) that is constrained by a finite number of informative messages under way denoted by $I_{\max}$. This assumption corresponds to a window flow constrained sender with a maximum number of outstanding informative messages $I_{\max}$ given a perfect reverse channel.}
  \label{fig:system_model}
\end{figure}

We can now apply Palm's inversion formula \cite[Theorem 7.1]{le2010performance}, which states that, for any bounded, measurable test function $\varphi$ we have
\vspace{-5pt}
\begin{equation}
    \E\left[\varphi(X_t)\right] = \hat{\lambda} \E^\circ\left[\int_{T_0=0}^{T_1}\varphi(X_s) ds\right]
    \label{eq:palm_inversion_formula}
  \end{equation}
In the above,  $\E^\circ$ stands for the Palm expectation, which is the conditional expectation given that the point process has a point
at time $0$ (i.e. given that there is a departure from the $M/M/I_{\max}/I_{\max}^*$ queue at time $0$)\footnote{Note that the definition of the conditional expectation can be given a meaning even though the probability of the point process having a point at time $0$ exactly is $0$ \cite[Section 7.2.2]{le2010performance}.}. Also, under this conditional
expectation, $T_0=0$ and $T_1$ is the following departure instant. Last, $\hat{\lambda}$ is the intensity of the point process of departures,
which can be calculated from the Markov chain as $\hat{\lambda} = \sum_i i p_i  \mu = \mu \bar{N}$ with the stationary expectation of $Z_t$ denoted as $\bar{N}:=\sum_{i=1}^{I_{\max}}i p_{i}$.

 Observe that obtaining $\E\left[\varphi(X_t)\right]$ for arbitrary $\varphi$ is equivalent to
 finding the stationary distribution of the age of information at an arbitrary point in time.
Applying these ideas to the AoI gives the following theorem:

\begin{theorem}
The stationary PDF of the age of information at an arbitrary point in time, $f(x)$, is given by

\begin{equation}
f(x) = \hat{\lambda} \int_{x-x_0}^{\infty} \int_{0}^{x}  f^\circ(x_0,t_1) dx_0 dt_1
    \label{eq:f_age_fct_of_pdf_at_arrival}
\end{equation}
 where  $f^\circ(x_0, t_1)$ denotes the joint PDF of the age $x_0$ just after an informative message arrival \emph{and} of the time $t_1$ that will elapse  until the next informative message arrives.
\end{theorem}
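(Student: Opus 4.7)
The plan is to apply Palm's inversion formula (\ref{eq:palm_inversion_formula}) with an arbitrary bounded measurable test function $\varphi$, then read off the density $f(x)$ from the resulting expression. The crucial observation is that under the Palm probability $\P^\circ$ conditioned on an informative arrival at time $T_0=0$, the sample path of the age on the interval $[0,T_1]$ is completely determined by the two random variables $x_0 := X_{0^+}$ (age just after the arrival) and $t_1 := T_1$ (time to the next informative arrival): indeed, between consecutive informative arrivals the age grows linearly at rate $1$, so that $X_s = x_0 + s$ for $s\in[0,t_1]$. This transforms a path integral into an ordinary integral against the joint law $f^\circ(x_0,t_1)$.

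Concretely, I would proceed as follows. First, substitute $X_s = x_0 + s$ into the right-hand side of (\ref{eq:palm_inversion_formula}) and condition on $(x_0,t_1)$ to obtain
\begin{equation*}
\E[\varphi(X_t)] \;=\; \hat{\lambda}\int_0^\infty\!\!\int_0^\infty f^\circ(x_0,t_1)\!\left(\int_0^{t_1}\!\varphi(x_0+s)\,ds\right)dt_1\,dx_0.
\end{equation*}
Second, perform the change of variable $x = x_0+s$ in the innermost integral, so that $\varphi(x_0+s)\,ds$ becomes $\varphi(x)\,dx$ with $x\in[x_0,\,x_0+t_1]$. Third, apply Fubini to bring the $x$-integration outside and rewrite the domain $\{x_0\le x\le x_0+t_1\}$ as $\{0\le x_0\le x,\;t_1\ge x-x_0\}$, yielding
\begin{equation*}
\E[\varphi(X_t)] \;=\; \int_0^\infty \varphi(x)\left[\hat{\lambda}\int_0^{x}\!\!\int_{x-x_0}^{\infty} f^\circ(x_0,t_1)\,dt_1\,dx_0\right]dx.
\end{equation*}
Since $\varphi$ is an arbitrary bounded measurable test function, the bracketed expression must coincide with $f(x)$ almost everywhere, which is precisely (\ref{eq:f_age_fct_of_pdf_at_arrival}).

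A few points require a bit of care but are not serious obstacles. The joint stationarity of $X_t$ and the departure point process, needed to invoke Palm's inversion formula, has already been justified in the excerpt via the deterministic dependence of $X_t$ on $Z_{(-\infty,t]}$. One must use the right-continuous version $X_{0^+}=x_0$ rather than $X_{0^-}$, because $X$ may jump down at the informative arrival; this is consistent with writing $X_s = x_0+s$ on $[0,t_1]$. Finally, the swap of integration order is legitimate because the integrand is nonnegative (Tonelli), and the identification $f(x) = \hat{\lambda}\int_0^x\int_{x-x_0}^\infty f^\circ(x_0,t_1)\,dt_1\,dx_0$ at every continuity point of $f$ follows from the arbitrariness of $\varphi$. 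The main conceptual step, and the one worth highlighting, is the reduction of a path integral under $\E^\circ$ to a two-dimensional integral against $f^\circ(x_0,t_1)$, made possible by the deterministic drift of the age between informative arrivals.
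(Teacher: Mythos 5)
Your proposal is correct and follows essentially the same route as the paper: invoke Palm's inversion formula, use the deterministic drift $X_s = X_{0^+}+s$ on $[0,T_1]$ to reduce the path integral to an integral against $f^\circ(x_0,t_1)$, change variables to $x=x_0+s$, swap the order of integration, and identify $f(x)$ from the arbitrariness of $\varphi$. Your remarks on Tonelli and on using $X_{0^+}$ rather than $X_{0^-}$ make explicit what the paper leaves implicit, but the argument is the same.
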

Note that $f^\circ(x_0, t_1)$ is a Palm PDF, i.e. it corresponds to observations made upon the arrival of an informative message. Following the conventions in \cite{baccelli2012palm}, we use a $^\circ$ superscript to denote a Palm PDF.
\begin{proof}
We apply Palm's inversion formula \eqref{eq:palm_inversion_formula}.
Next, note that for $0\leq s \leq T_1$ we have $X_s=s+X_{0^+}$, therefore
\begin{align}
\E^\circ\left[\int_{0}^{T_1}\varphi(X_s) ds\right] = \E^\circ\left[\int_{0}^{T_1}\varphi(s+X_{0^+}) ds\right]
 \label{eq:plam_inversion_derivation1}
  \end{align}
 By definition of $f^\circ(x_0, t_1)$, it follows that
\begin{align}
    &\E^\circ\left[\int_{0}^{T_1}\varphi(X_s) ds\right] \nonumber \\
    &= \int_{0}^{\infty}  \int_{0}^{\infty} \int_{0}^{t_1} \varphi(x_0 + s) ds f^\circ(x_0,t_1) dx_0 dt_1\nonumber \\
    &= \int_{0}^{\infty}  \varphi(u)  \int_{0}^{u}  \int_{u-x_0}^{\infty} f^\circ(x_0,t_1) dt_1 dx_0  du \quad,
    \label{eq:plam_inversion_derivation}
\end{align}
  where, in the last line we substituted $u=x+s$ with $x\leq u \leq x+t$.


Now we find the PDF of the age at any arbitrary point in time $f(x)$ from comparing \eqref{eq:plam_inversion_derivation} with
    \begin{align}
 \E\left[\varphi(X_t)\right] = \int_{0}^{\infty} \varphi(x) f(x)  dx
\label{eq:expectation_test_fct_age}
  \end{align}
  where we find
\begin{equation}
f(x) = \hat{\lambda}  \int_{0}^{x}  \int_{x-x_0}^{\infty} f^\circ(x_0,t_1) dt_1 dx_0
    \label{eq:f_age_fct_of_pdf_at_arrival}
\end{equation}
%
%

\end{proof}
In the following, we calculate the Palm distribution $f^\circ(x_0,t_1)$. 
This is tractable because it involves $Z_t$ only, which is a Markov process on a finite state space. 
\section{Computing the Palm Distributions}
\label{sec:computing_palm_probabilities}

\subsection{Decomposition into Forward and Backward Components}
In order to compute the Palm PDF $f^\circ(x_0, t_1)$ we observe that the part on $x_0$ (the age) involves the past of $Z_t$ whereas the part on $t_1$ (time until a new arrival) involves the future of $Z_t$. This is captured by the following theorem.

\begin{theorem}
The joint PDF of both the age $x_0$ just after the informative message arrival \emph{and} the length of the cycle $t_1$ until the arrival of the next informative message is given by
\begin{align}
\label{eq:forward-backward2}
 f^\circ(x_0,t_1)= \sum_{(n',n) \mst 1\leq n+1\leq n'\leq I_{\max}} p^\circ_{n',n} g^\circ(x_0|n',n)h(t_1|n)
\end{align}
where
\begin{itemize}
  \item $p^\circ_{n',n}$ is the probability that an arbitrary message arrival
happens at a transition $(n'\to n)$ of the Markov chain $Z_t$ and is given by
\begin{equation}
  p^\circ_{n',n} = \frac{p_{n'}}{\bar{N}} \; \ind{1\leq n+1\leq n'\leq I_{\max}}
  \label{eq-p0}
\end{equation}
in the above, $\bar{N}$ is the stationary expectation of $Z_t$ and $p_i$ is given in \eqref{eq:steady_state_prob_chain_nr_msgs_underway};
  \item  $g^\circ(x_0|n',n)$ is the PDF of the Palm distribution of the age $x_0$ just after the informative message arrival given that the state of the Markov chain is $n'$ just before the arrival of the informative message and $n$ just after the arrival, where $n'\geq n+ 1$;
  \item  $h(t_1|n)$ is the stationary PDF of the time that will elapse from time $t$ until the next informative message arrives,  given that $Z_t=n$.

\end{itemize}
 \end{theorem}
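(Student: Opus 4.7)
The plan is to establish the factorization by first identifying the discrete-state Palm probability $p^\circ_{n',n}$ from the Markov chain dynamics, and then invoking the Markov property of $Z_t$ to separate the past-dependent age $x_0$ from the future-dependent inter-arrival time $t_1$.

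First I would compute $p^\circ_{n',n}$. Every informative message arrival corresponds to a downward jump of $Z_t$, and by Section~\ref{sec:underlying_model} the admissible jumps are precisely $\{(n',n):1\leq n+1\leq n'\leq I_{\max}\}$, each occurring at rate $\mu$ out of state $n'$. Hence the long-run rate of $n'\to n$ transitions is $p_{n'}\mu$, and normalizing by the total intensity of informative arrivals $\hat{\lambda}=\mu\bar{N}$ yields $p^\circ_{n',n}=p_{n'}/\bar{N}$, matching \eqref{eq-p0}. This lets me write, by total probability under the Palm measure,
$$
f^\circ(x_0,t_1)=\sum_{(n',n)} p^\circ_{n',n}\, f^\circ\!\left(x_0,t_1\mid n'\to n\right).
$$

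Next I would argue that under the Palm measure conditioned on the transition $n'\to n$ at time $0$, the joint density factorizes as $g^\circ(x_0\mid n',n)\, h(t_1\mid n)$. The age $x_0=\tau'_i-\tau_i$ of the arriving message is a functional of the past trajectory and the identity of the departing message; conditioned on the pre-state $n'$ and on the departing message being the $(n'-n)$-th in timestamp order, its density is $g^\circ(x_0\mid n',n)$ by definition. On the other hand, $t_1$ is the time until the next downward jump of $Z_t$, which by the strong Markov property and the memorylessness of both the Poisson arrivals and the exponential transit times depends only on the post-state $Z_{0^+}=n$ and is independent of the pre-time-$0$ history; this yields $h(t_1\mid n)$. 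Substituting back into the total-probability decomposition produces exactly \eqref{eq:forward-backward2}.

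The main obstacle I anticipate is making the independence argument rigorous under the Palm measure. The Palm conditioning is a size-biasing by the occurrence of a jump at time $0$, so one must verify that after this conditioning the past $\{Z_t\}_{t<0}$ (which drives $x_0$) and the future $\{Z_t\}_{t>0}$ (which drives $t_1$) still decouple given the post-state $n$. This follows from the Campbell--Mecke formula combined with the strong Markov property at the jump epoch: the time-$0$ state $n$, together with the fresh exponential clocks of messages still in transit and future arrivals, is a sufficient statistic for the future, so $h(t_1\mid n)$ carries no residual dependence on $n'$ or on $x_0$.
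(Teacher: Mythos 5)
Your proposal is correct and follows essentially the same route as the paper: you decompose $f^\circ$ by total probability over the transition $(n',n)$, identify $p^\circ_{n',n}$ as the intensity ratio $p_{n'}Q_{n',n}/\hat{\lambda}=p_{n'}/\bar{N}$ (the paper reaches the same formula by citing a Palm-calculus theorem for transition probabilities), and invoke the Markov property at the jump epoch to factor the conditional joint density into the backward part $g^\circ(x_0\mid n',n)$ and the forward part $h(t_1\mid n)$. Your added remark about memoryless exponential clocks making $Z_{0^+}=n$ a sufficient statistic for the future is exactly the justification the paper compresses into ``by the Markov property.''
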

The proof exploits the Markov property of $Z_t$, which expresses that the future depends on the past only through the present state.

 \begin{proof}
Define $f^\circ(x_0, t_1|n', n)$ as the joint PDF of the Palm distribution of
both the age $x_0$ just after the informative message arrival \emph{and}
the length of the cycle $t_1$ until the arrival of the next informative message, given that the state of the Markov chain is $n'$ just before the arrival of the informative message and $n$ just after the arrival, where $n'\geq n+ 1$.
It follows that the required PDF $f^\circ(x_0, t_1)$ is given by

\begin{align}
\label{eq:forward-backward}
 f^\circ(x_0,t_1)= \sum_{(n',n) \mst 1\leq n+1\leq n'\leq I_{\max}} p^\circ_{n',n} f^\circ(x_0, t_1|n',n)
\end{align}
where $p^\circ_{n',n}$  is the probability that an arbitrary message arrival
happens at a transition $(n'\to n)$ of the Markov chain $Z_t$. By \cite[Thm 7.1.2]{boudec2011performance}, such a probability is given by

\begin{equation}
  p^\circ_{n',n} = \eta p_{n'} Q_{n',n}, \; \ind{1\leq n+1\leq n'\leq I_{\max}}
\end{equation}
where $\mathbf{1}_{\{\cdot\}}$ is the indicator function, equal to $1$ when the condition is true and $0$ otherwise, $p_{n'}$ is the stationary probability given in \eqref{eq:steady_state_prob_chain_nr_msgs_underway}, $Q_{n',n}$ is the transition rate in \eqref{eq:transition_rates_fwd_process} and $\eta$ is a normalizing constant.  Observe that $Q_{n',n}=\mu$, which gives $\eta^{-1}=\sum_{i=1}^{I_{\max}}i p_{i}=\bar{N}$ where $\bar{N}$ is the stationary expectation of $Z_t$. It finally comes
\begin{equation}
  p^\circ_{n',n} = \frac{p_{n'}}{\bar{N}} \; \ind{1\leq n+1\leq n'\leq I_{\max}}
\end{equation}


Let $g^\circ(x_0|n',n)$ denote the PDF of the Palm distribution of the age $x_0$ just after the informative message arrival given that the state of the Markov chain is $n'$ just before the arrival of the informative message and $n$ just after the arrival, where $n'\geq n+ 1$.

Recall that
$h(t_1|n)$ denotes the stationary PDF of the time that will elapse from time $t$ until the next informative message arrives,  given that $Z_t=n$. By the Markov property, this is also the PDF of the Palm distribution of the time until the next informative message arrives given that the state of the Markov chain is $n'$ just before the arrival of the informative message and $n$ just after the arrival. Again by the Markov property,
$f(x_0, t_1|n'n)= g^\circ(x_0|n',n)h(t_1|n)$, which proves \eqref{eq:forward-backward2}.
\end{proof}

We next compute $h(t_1|n)$, which we call the forward component of \eqref{eq:forward-backward2}. The computation of the backward component $g^\circ(x_0|n',n)$ will involve a similar method plus a time-reversal argument.

\subsection{Computation of the Forward Component}

First, we will introduce the following lemma to calculate the Laplace-Stieltjes Transform (LST) of the time until the occurrence of the \emph{next transition of interest} in a continuous-time Markov chain $\{\tilde{Z}(t)\}_{t\in\mathbb{R_+}}$ conditioned on $Z_t=n$. The transitions of interest are defined by some subset $\tilde{\mathcal{F}}$ of $E\times E$, where $E\subseteq \mathbb{N}$ is the state-space of the Markov chain.

\begin{lemma}\label{lemma:LST_of_1st_event_forward_component_given_state_n}
Consider a time-homogeneous, continuous-time Markov chain $(\tilde{Z}_t)_{t\in\mathbb{R_+}}$ with state space $E\subseteq \Nats$
and with transition rates $\tilde{Q}_{n,n'}$;
let $\tilde{d}_{n}=\sum_{n'\in E}\tilde{Q}_{n,n'}$ denote the sum of all outgoing rates from state $n$ and
assume that $\tilde{d}_{n}>0$ for all $n\in E$. Let $\tilde{\mathcal{F}}\subseteq E$ such that
$\tilde{Q}_{n,n'}> 0$
for all $(n,n')\in \tilde{\mathcal{F}}$.

Call $\tilde{Y}_t$ the time that will elapse from $t$ until the next jump in $\tilde{\mathcal{F}}$ of the Markov chain, i.e.
$\tilde{Y}_t =\inf\{s> 0, (Z_{(t+s)^-},Z_{(t+s)^+})\in \tilde{\mathcal{F}}\}$.
The conditional LST of $\tilde{Y}_t$ given that $\tilde{Z}_t=n$, denoted as $\tilde{f}_{n}(\nu)$, satisfies
\begin{align}
&\tilde{f}_{n}(\nu) \coloneqq \E\left[e^{-\nu \tilde{Y}_t} | \tilde{Z}_t=n\right] \nonumber\\
&= \frac{1}{\tilde{d}_{n}+ \nu}\left(\sum_{\substack{n',\\(n,n')\notin\tilde{\mathcal{F}}}} \tilde{f}_{n'}(\nu) \tilde{Q}_{n,n'}+ \sum_{\substack{n',\\(n,n')\in\tilde{\mathcal{F}}}} \tilde{Q}_{n,n'}\right).
\label{eq:LST_of_next_transition_of_interest}
\end{align}
\end{lemma}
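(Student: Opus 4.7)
The plan is to prove the lemma by first-step analysis, conditioning on the first jump of the Markov chain out of the current state $n$. Since $(\tilde{Z}_t)$ is time-homogeneous, it suffices to analyze $\tilde{Y}_0$ conditioned on $\tilde{Z}_0=n$. The sojourn time $S$ in state $n$ is exponentially distributed with rate $\tilde{d}_n$, and is independent of the next visited state $N$, which takes value $n'$ with probability $\tilde{Q}_{n,n'}/\tilde{d}_n$. This is the standard embedded description of a continuous-time Markov chain, and it provides the clean split needed for the recursion.

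Next, I would decompose $\tilde{Y}_0$ according to whether the very first jump is already a ``transition of interest'' or not. If $(n,N) \in \tilde{\mathcal{F}}$, then $\tilde{Y}_0 = S$. If $(n,N) \notin \tilde{\mathcal{F}}$, then by the strong Markov property at the first jump time $S$, the remaining time until the next transition of interest has the same distribution as $\tilde{Y}_0$ conditioned on $\tilde{Z}_0 = N$, and moreover is independent of $S$ given $N$. Writing this out, for $\nu \ge 0$,
\begin{align*}
\tilde{f}_{n}(\nu) &= \sum_{n':(n,n')\in\tilde{\mathcal{F}}} \frac{\tilde{Q}_{n,n'}}{\tilde{d}_n}\,\E[e^{-\nu S}] \\
&\quad + \sum_{n':(n,n')\notin\tilde{\mathcal{F}}} \frac{\tilde{Q}_{n,n'}}{\tilde{d}_n}\,\E[e^{-\nu S}]\,\tilde{f}_{n'}(\nu).
\end{align*}
Using the well-known identity $\E[e^{-\nu S}] = \tilde{d}_n/(\tilde{d}_n + \nu)$ for an exponential of rate $\tilde{d}_n$, the common factor $1/(\tilde{d}_n + \nu)$ can be pulled out, yielding exactly \eqref{eq:LST_of_next_transition_of_interest}.

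The main subtlety, rather than the main obstacle, is the correct invocation of the strong Markov property at the first jump time together with the conditional independence of the sojourn time and the identity of the next state. The condition $\tilde{d}_n > 0$ for all $n \in E$ ensures that the chain always jumps, so the first-jump decomposition is well-defined; the assumption $\tilde{Q}_{n,n'} > 0$ for all $(n,n') \in \tilde{\mathcal{F}}$ ensures every listed transition of interest is actually feasible and plays a role only in interpretation (it prevents vacuous contributions on the right-hand side). No existence or uniqueness argument for the linear system in $\{\tilde{f}_n(\nu)\}$ is needed here, since the $\tilde{f}_n(\nu)$ are defined directly as conditional expectations and the identity above is derived, not posited.
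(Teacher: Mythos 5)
Your proposal is correct and follows essentially the same route as the paper: first-step analysis at the embedded jump chain, using the independence of the exponential sojourn time (rate $\tilde{d}_n$) and the next state (law $\tilde{Q}_{n,n'}/\tilde{d}_n$), splitting on whether the first jump lies in $\tilde{\mathcal{F}}$, and applying the (strong) Markov property to recurse on the non-interesting transitions before pulling out the factor $\tilde{d}_n/(\tilde{d}_n+\nu)$. The paper phrases the Markov step via conditioning on $\tilde{S}_t=s$ and the law of total expectation rather than explicitly invoking the strong Markov property at the jump time, but the content is identical.
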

\begin{proof}
%
Fix some arbitrary time $t$ and define $\tilde{S}_t$ as the time until the next transition (of interest or not) out
of state $\tilde{Z}_t$ and let $ N'_t\coloneqq Z_{t+\tilde{S}_t}$ denote the next state. It is known \cite{gillespie1976general} that, conditional to $\tilde{Z}_t=n$, $N'_t$ and $\tilde{S}_t$ are independent, the distribution of $\tilde{S}_t$ is exponential with rate
$\tilde{d}_{n}$ and the distribution of $N'_t$ is given by $\P(N'_t=n'| \tilde{Z}_t=n)=\frac{\tilde{Q}_{n,n'}}{\tilde{d}_n}$. It follows that
\begin{equation}
\P\left[N'_t=n'| \tilde{Z}_t=n,\tilde{S}_t=s\right]=\frac{\tilde{Q}_{n,n'}}{\tilde{d}_n}
\label{eq:gillespie}
\end{equation}
and
\begin{equation}
\E\left[e^{-\nu \tilde{S}_t}\right] = \frac{\tilde{d}_{n}}{\tilde{d}_{n}+ \nu}
\label{eq:gillespie2}
\end{equation}

Also let $\tilde{R}_t$ denote the residual time from the next transition until the next transition of interest, i.e. $\tilde{R}_t=0$ whenever $(\tilde{Z}_t,N'_t)\in \tilde{\mathcal{F}}$ and
otherwise 
$\tilde{R}_t=\tilde{Y}_{t+\tilde{S}_t}$. Hence
%
%
\begin{equation}
\tilde{Y}_t = \tilde{S}_t + \tilde{R}_t
\label{eq:Y_first_departure}
\end{equation}
%

By conditioning on $\tilde{S}_t=s$ we can write
\begin{align}
&\E\left[e^{-\nu \tilde{Y}_t} | \tilde{Z}_t=n,\tilde{S}_t=s\right] \nonumber\\
&= e^{-\nu s}\E\left[e^{-\nu \tilde{R}_t} | \tilde{Z}_t=n,\tilde{S}_t=s\right]
\label{eq:Y_lst_1}
\end{align}

By conditioning with respect to $N'_t$ in the latter term and applying \eqref{eq:gillespie} we obtain
\begin{align}
&\E\left[e^{-\nu \tilde{R}_t} | \tilde{Z}_t=n,\tilde{S}_t=s\right] \nonumber\\
&= \sum_{n'\in E}\left(\E\left[e^{-\nu \tilde{R}_t} | \tilde{Z}_t=n,\tilde{S}_t=s, N'_t=n'\right]\times \right.\nonumber\\
&\left.\P\left[N'_t=n'| \tilde{Z}_t=n,\tilde{S}_t=s\right]\right)\nonumber\\
&=\sum_{n'\in E}\left(\E\left[e^{-\nu \tilde{R}_t} | \tilde{Z}_t=n,\tilde{S}_t=s, N'_t=n'\right]\frac{\tilde{Q}_{n,n'}}{\tilde{d}_n}\right)
\label{eq:Y_lst_1b}
\end{align}

Now if $(n,n')\in \tilde{\mathcal{F}}$ then $\tilde{R}_t=0$ hence
\begin{equation}
\E\left[e^{-\nu \tilde{R}_t} |\tilde{Z}_t=n,\tilde{S}_t=s, N'_t=n'\right]=1  \mbox{ if } (n,n')\in \tilde{\mathcal{F}}
\label{eq:Y_lst_1c}
\end{equation}

Else, i.e. if $(n,n')$ is not in $\tilde{\mathcal{F}}$, $\tilde{R}_t=\tilde{Y}_{t+\tilde{S}_t}$ is the time that remains
to elapse until the next transition of interest; by the Markov property,
the future of the Markov chain depends on the history only via the current state, i.e.
\begin{align}
&\E\left[e^{-\nu \tilde{Y}_{t+\tilde{S}_t}} | \tilde{Z}_t=n,\tilde{S}_t=s, N'_t=n'\right] \nonumber\\
&
=\E\left[e^{-\nu \tilde{Y}_{t+s}} | N'_t=n'\right]=  \E\left[e^{-\nu \tilde{Y}_{t+s}} | Z_{t+s}=n'\right] \nonumber\\
&=\tilde{f}_{n'}(\nu)
\label{eq:Y_lst_1d}
\end{align}
where the last equality is because the Markov chain is time-homogeneous.

Combining \eqref{eq:Y_lst_1} with \eqref{eq:Y_lst_1b}-\eqref{eq:Y_lst_1d} gives
\begin{align}
&\E\left[e^{-\nu \tilde{Y}_t} | \tilde{Z}_t=n,\tilde{S}_t=s\right]\nonumber\\
&= e^{-\nu s}\left(\sum_{\substack{n',\\(n,n')\notin\tilde{\mathcal{F}}}} \tilde{f}_{n'}(\nu) \frac{\tilde{Q}_{n,n'}}{\tilde{d}_{n}}+ \sum_{\substack{n',\\(n,n')\in\tilde{\mathcal{F}}}} \frac{\tilde{Q}_{n,n'}}{\tilde{d}_{n}}\right)
\label{eq:Y_lst_1f}
\end{align}
By the law of total expectation we can now write
\begin{align}
&\tilde{f}_{n}(\nu) = \E\left[e^{-\nu \tilde{Y}_t} | \tilde{Z}_t=n\right] \nonumber\\
&=\E\left[\E\left[e^{-\nu \tilde{Y}_t} | \tilde{Z}_t=n,\tilde{S}_t\right]\right] \nonumber\\
&= \E\left[e^{-\nu \tilde{S}_t}\right]\left(\sum_{\substack{n',\\(n,n')\notin\tilde{\mathcal{F}}}} \tilde{f}_{n'}(\nu) \frac{\tilde{Q}_{n,n'}}{\tilde{d}_{n}}+ \sum_{\substack{n',\\(n,n')\in\tilde{\mathcal{F}}}} \frac{\tilde{Q}_{n,n'}}{\tilde{d}_{n}}\right)
\label{eq:LST_of_next_transition_of_interest2}
\end{align}
Using \eqref{eq:gillespie2} completes the proof.
\end{proof}
\vspace{5pt}

Now we can use Lem.~\ref{lemma:LST_of_1st_event_forward_component_given_state_n} to calculate the stationary PDF $h(t_1|n)$ of the time that will elapse from a fixed time $t$ until the next informative message arrives conditioned on $Z_t=n$.
The set of transitions of interest is $\tilde{\mathcal{F}}\coloneqq \{(i,j)\}_{i>j}$, i.e. the transitions associated with the arrival of informative messages. 
The transition rates $Q_{i,j}$ are given in \eqref{eq:transition_rates_fwd_process} and

\begin{equation}
\label{eq-dtilde}\tilde{d}_{n}= \lambda \mathbf{1}_{\{n<I_{\max}\}} + \mu n\end{equation}

 where $\mathbf{1}_{\{\cdot\}}$ is the indicator function, equal to $1$ when the condition is true and $0$ otherwise.

%

The Laplace-Stieltjes Transform of $h(t_1|n)$ continues to be denoted by $\tilde{f}_{n}(\nu)$; the application of Lem.~\ref{lemma:LST_of_1st_event_forward_component_given_state_n} gives:
\begin{align}
\tilde{f}_{n}(\nu) = \frac{1}{\tilde{d}_{n}+ \nu} \left[  n \mu +  \lambda\tilde{f}_{n+1}(\nu)\right].
\label{eq:LST_of_next_arrival_given_state_nprime}
\end{align}
\begin{figure}[t]
  \centering
  \includegraphics[width=0.9\linewidth]{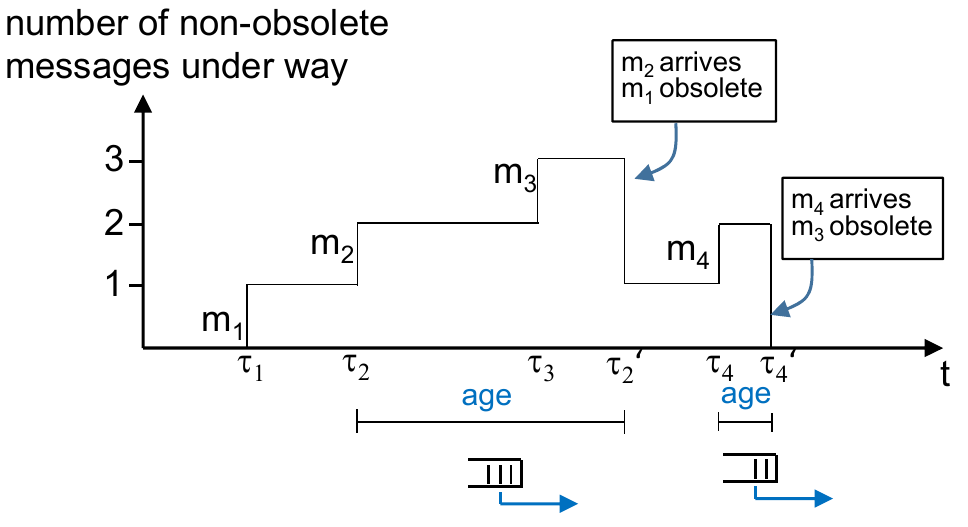}
  \caption{A sample path of the number of non-obsolete messages under way given the system model from Sect.~\ref{sec:system_model}. Looking at the forward process: The downward jumps mark the arrival of informative messages at the receiver which make previous messages obsolete. As messages depart in batches of random sizes in this model the waiting time of the freshest message of the batch corresponds to the age value set upon the arrival of that message at the receiver.}
  \label{fig:sample_path_jumps}
\end{figure}
%
This recursive relation can be rewritten using matrix notation as
\begin{equation}
\mt{(\nu I + \tilde{D}) \tilde{f} = \pmb{\bar{\mu}} + \bar{\Lambda} \tilde{f}}
\label{eq:LST_of_next_arrival_given_state_nprime_matrix_recursive}
\end{equation}
with the identity matrix $\mt{I}$, the vectors  $\mt{\tilde{f}}= [\tilde{f}_{0}(\nu),\dots,\tilde{f}_{I_{\max}}(\nu)]\T$, and $\pmb{\bar{\mu}} = [0,\mu,2\mu\dots,I_{\max}\mu]\T$, and the matrices $\mt{\tilde{D}} \coloneqq \diag (\tilde{d}_0,\tilde{d}_1,\dots,\tilde{d}_{I_{\max}})$,
and
\begin{align}
\pmb{\bar{\Lambda}}
\coloneqq
\begin{bmatrix}
  \mathbf{0} &   \lambda& \hdots& \mathbf{0}  \\
\vdots &  \ddots& \lambda& \vdots \\
\vdots& \ddots& \ddots &\lambda\\
\mathbf{0} & \hdots & \mathbf{0}&\mathbf{0}
\end{bmatrix}.
\label{eq:definition_of_Lambda_bar}
\vspace{-20pt}
\end{align}
Now we can directly solve for the conditional LSTs as
\begin{equation}
\mt{\tilde{f} = (\nu I + \tilde{D} - \bar{\Lambda})^{-1} \pmb{\bar{\mu}} }
\label{eq:LST_of_next_arrival_given_state_nprime_matrix}
\end{equation}

\subsection{Computation of the Backward Component}

Recall that $g^\circ(x_0|n',n)$ denotes the PDF of the Palm distribution of the age $x_0$ just after the informative message arrival
given that the state of the Markov chain is $n'$ just before the arrival of the informative message and $n$ just after the arrival, where $n'\geq n+ 1$.
since the arrival of the freshest message in the served batch.
For the computation of $g^\circ(x_0|n',n)$ we resort to time-reversal as this allows to use a similar method as for the forward component.

The time-reversed process $Z_t^r$ is defined by $Z_t^r=Z_{-t}$.
In a nutshell, time reversal allows us to change the underlying queueing model from Sect.~\ref{sec:underlying_model} into a FIFO queue where arrivals occur in message batches of random size while the server removes exactly \emph{one} message on each visit.
To illustrate this, consider the sample path shown in Fig.~\ref{fig:sample_path_jumps} in the reverse time direction.

It is shown in \cite{Kelly:Reversibility-2011} [Section 1.7] that if $Z_t$ is endowed with its stationary probability, then the
time-reversed process is also a time-homogeneous continuous time Markov chain with same state space and same stationary probability,
but with different transitions rates. Specifically, by \cite{Kelly:Reversibility-2011} [Theorem 1.12] the transition rates $Q'_{i,j}$
for $Z_t^r$ depend on the transition rates of the original Markov process \eqref{eq:transition_rates_fwd_process} and its stationary
distribution \eqref{eq:steady_state_prob_chain_nr_msgs_underway}. We obtain $Q'_{i,i-1} =\lambda_{i}'$ for $i=1...I_{\max}$, $Q'_{i,j} =\mu_{ij}'$ for $i=0...I_{\max}-1,i<j\leq I_{\max}$ and $Q'_{i,j} =0$ otherwise, with
\begin{equation}
\lambda_i' = \begin{cases}
\frac{i}{i+1}\left(\lambda + (i+1) \mu\right) &\text{for $1\leq i < I_{\max}$}\\ \\
I_{\max} \mu &\text{for $i = I_{\max}$}
\end{cases}
\label{eq:lambda_prime}
\end{equation}
and
\begin{equation}
\mu_{ij}' = \begin{cases}
 \frac{(j+1) \mu \lambda^{j-i}}{(i+1)\prod\limits_{k=i+2}^{j+1}\left(\lambda + k\mu\right)}&\text{for $j \neq I_{\max}$}\\ \\
\frac{\lambda^{I_{\max}-i}}{(i+1)\prod\limits_{k=i+2}^{I_{\max}}\left(\lambda + k\mu\right)}&\text{for $j = I_{\max}$}
\end{cases}
\label{eq:mu_prime}
\end{equation}
for $0\leq i< I_{\max}$. The derivation of \eqref{eq:lambda_prime}, \eqref{eq:mu_prime} is given in the appendix.


In $Z_t$, upon serving a batch of messages, the sojourn time of the freshest message of that batch is the age of that particular informative message. In $Z^r_t$, this is given by the sojourn time of the $(n+1)$st message of an arriving batch of size $n'-n$. In $Z^r_t$, the arrival of a batch corresponds to a transition $n\to n'$ with $n'\geq n+1$ and the size of the arriving batch is $n'-n$. It follows that, for $n'\geq n+ 1$, $g^\circ(x_0|n',n)$ can be re-interpreted as the PDF of the time from now until the $(n+1)$st departure of $Z^r_t$, given that $Z^r_t$ is doing a transition $n\to n'$ now.
Since $Z^r_t$ is also Markov, we can apply the Markov property and obtain that this is
simply the PDF of the time from now until the $(n+1)$st departure of $Z^r_t$
given that $Z^r_t=n'$. For $n+1=1$ this is the conditional PDF of the time until a next departure, which is exactly the problem that is solved in Lemma~\ref{lemma:LST_of_1st_event_forward_component_given_state_n}, and which we now extend as follows (the proof is similar and is not given).

\begin{lemma}\label{lemma:LST_of_all_events_forward_component_given_state_n}
Consider a time-homogeneous, continuous-time Markov chain $(\tilde{Z}_t)_{t\in\mathbb{R_+}}$ with state space $E\subseteq \Nats$
and with transition rates $\tilde{Q}_{n,n'}$;
let $\tilde{d}_{n}=\sum_{n'\in E}\tilde{Q}_{n,n'}$ denote the sum of all outgoing rates from state $n$ and
assume that $\tilde{d}_{n}>0$ for all $n\in E$. Let $\tilde{\mathcal{F}}\subseteq E$ such that
$\tilde{Q}_{n,n'}> 0$
for all $(n,n')\in \tilde{\mathcal{F}}$.

For $k\geq 1$, call $\tilde{Y}^k_t$ the time that will elapse from $t$ until the $k$th jump in $\tilde{\mathcal{F}}$ of the Markov chain, i.e.
$\tilde{Y}^1_t =\inf\{s\geq 0, (\tilde{Z}_{(t+s)^-},\tilde{Z}_{(t+s)^+})\in \tilde{\mathcal{F}}\}$ and for $k\geq 2$, $\tilde{Y}^k_t =\inf\{s> Y^{k-1}, (\tilde{Z}_{(t+s)^-},\tilde{Z}_{(t+s)^+})\in \tilde{\mathcal{F}}\}$.
The conditional LST of $\tilde{Y}^k_t$ given that $\tilde{Z}_t=n$, denoted as $\tilde{f}_{n,k}(\theta)$, satisfies, for $k\geq 1$:
\begin{align}
&\tilde{f}_{n,k}(\theta) = \frac{1}{\tilde{d}_{n}+ \theta} \times \nonumber\\
&\left(\sum_{\substack{n',\\(n,n')\notin\tilde{\mathcal{F}}}} \tilde{f}_{n',k}(\theta) \tilde{Q}_{n,n'}+ \sum_{\substack{n',\\(n,n')\in\tilde{\mathcal{F}}}} \tilde{f}_{n',k-1}(\theta)\tilde{Q}_{n,n'}\right).
\label{eq:LST_of_next_transition_of_interest}
\end{align}
where $\tilde{f}_{n,0}(\theta)=1$ by convention.
\end{lemma}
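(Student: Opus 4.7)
The plan is to adapt the proof of Lemma~\ref{lemma:LST_of_1st_event_forward_component_given_state_n} almost verbatim, keeping the same conditioning argument on the first transition but refining the bookkeeping of the jump counter. Concretely, I would fix $t$ and $n$, and decompose $\tilde{Y}^k_t = \tilde{S}_t + \tilde{R}^k_t$, where $\tilde{S}_t$ is the sojourn time in state $n$ and $\tilde{R}^k_t$ is the residual time until the $k$th jump in $\tilde{\mathcal{F}}$. As in the base lemma, conditional on $\tilde{Z}_t = n$, Gillespie's decomposition gives $\tilde{S}_t \sim \mathrm{Exp}(\tilde{d}_n)$, independent of $N'_t := \tilde{Z}_{t+\tilde{S}_t}$, with $\P(N'_t = n' \mid \tilde{Z}_t = n) = \tilde{Q}_{n,n'}/\tilde{d}_n$.

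Next I would perform the case split that is the only genuine novelty compared with Lemma~\ref{lemma:LST_of_1st_event_forward_component_given_state_n}. Conditional on $\tilde{Z}_t = n$, $\tilde{S}_t = s$, and $N'_t = n'$, two scenarios arise: if $(n,n') \notin \tilde{\mathcal{F}}$ then no transition of interest has yet been counted and we still need $k$ such jumps starting from the new state $n'$ at time $t+s$; if $(n,n') \in \tilde{\mathcal{F}}$ then one transition of interest has been counted so we only need $k-1$ more from state $n'$. By time-homogeneity and the Markov property, $\tilde{R}^k_t$ conditional on $\tilde{Z}_{t+\tilde{S}_t} = n'$ is distributed exactly as $\tilde{Y}^k_0$ (resp.\ $\tilde{Y}^{k-1}_0$) started from state $n'$, so the conditional LST of $\tilde{R}^k_t$ equals $\tilde{f}_{n',k}(\theta)$ (resp.\ $\tilde{f}_{n',k-1}(\theta)$).

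Combining with the multiplicative effect $e^{-\theta \tilde{S}_t}$ and taking total expectation, first over $N'_t$ using the transition probabilities, then over $\tilde{S}_t$ using $\E[e^{-\theta \tilde{S}_t}] = \tilde{d}_n/(\tilde{d}_n+\theta)$, gives
\begin{equation*}
\tilde{f}_{n,k}(\theta) = \frac{1}{\tilde{d}_n + \theta}\!\left(\sum_{n':(n,n')\notin\tilde{\mathcal{F}}}\! \tilde{Q}_{n,n'} \tilde{f}_{n',k}(\theta) + \sum_{n':(n,n')\in\tilde{\mathcal{F}}}\! \tilde{Q}_{n,n'} \tilde{f}_{n',k-1}(\theta)\right),
\end{equation*}
which is the claimed recursion. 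The base case $\tilde{f}_{n,0}(\theta) = 1$ is simply the definition $\tilde{Y}^0_t \equiv 0$, and one sees consistency with Lemma~\ref{lemma:LST_of_1st_event_forward_component_given_state_n} by reading off $k=1$ and substituting $\tilde{f}_{n',0}(\theta) = 1$ in the second sum.

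The proof is therefore essentially a careful induction on $k$, and no step presents serious difficulty; the only mild obstacle is ensuring that the induction is well-founded, i.e., that for each fixed $k$ the recursion yields a consistent linear system for the vector $(\tilde{f}_{n,k})_{n\in E}$. Since the right-hand side involves $\tilde{f}_{\cdot,k-1}$ (known by induction hypothesis) only through the ``interest'' sum, and $\tilde{f}_{\cdot,k}$ through the ``non-interest'' sum with coefficients dominated by $\tilde{d}_n + \theta$ for $\mathrm{Re}(\theta) \geq 0$, the matrix $(\theta \mathbf{I} + \tilde{\mathbf{D}} - \tilde{\mathbf{B}})$ — where $\tilde{\mathbf{B}}$ collects the non-interest transition rates — is invertible in the same way as in the forward-component calculation, so the vector $\tilde{\mathbf{f}}_{\cdot,k}$ is uniquely determined given $\tilde{\mathbf{f}}_{\cdot,k-1}$.
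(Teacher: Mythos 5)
Your proof is correct and follows exactly the route the paper has in mind: the authors explicitly omit the proof of this lemma, noting it is ``similar'' to Lemma~\ref{lemma:LST_of_1st_event_forward_component_given_state_n}, and your argument is precisely the natural extension — same Gillespie decomposition $\tilde{Y}^k_t = \tilde{S}_t + \tilde{R}^k_t$, with the residual reset to $\tilde{Y}^k_{t+\tilde{S}_t}$ or $\tilde{Y}^{k-1}_{t+\tilde{S}_t}$ depending on whether the first jump is in $\tilde{\mathcal{F}}$, plus the base case $\tilde{f}_{n,0}\equiv 1$. The closing remark on well-posedness of the linear system is a nice bonus but not something the paper bothers with.
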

%



Let $f_{n',k}(\theta)$ be the LST of the time from now until the $k$th departure of $Z^r_t$
given that $Z^r_t=n'$, so that the LST of $g^\circ(x_0|n',n)$ is
$f_{n', n+1}$. To compute $f_{n',k}(\theta)$, we now apply
Lemma~\ref{lemma:LST_of_all_events_forward_component_given_state_n}
to the Markov chain $Z^r_t$, with transition rate matrix $Q'$, and obtain:

\begin{eqnarray}
\hspace{-20pt}f_{n',1}(\theta) \left[d'_{n'} +\theta\right]\mkern-18mu&=&\mkern-18mu \lambda'_{n'}\ind{n'>0} + \mkern-12mu\sum\limits_{j>n'} \mu_{n',j}' f_{j,1}(\theta)\label{eq:LST_of_next_departure_given_state_n_v2a}\\
\hspace{-20pt} f_{n',k}(\theta) \left[d'_{n'} +\theta\right]\mkern-18mu&=&\mkern-18mu \lambda'_{n'}\ind{n'>0} f_{n',k-1}(\theta)+ \mkern-11mu\sum\limits_{j>n'} \mu_{n',j}' f_{j,1}(\theta)
\label{eq:LST_of_next_departure_given_state_n_v2b}
\end{eqnarray}
for $0\leq n'\leq I_{\max}$ and $k\geq 2$. In the above, $\lambda'$ and $\nu'$ are given in
\eqref{eq:lambda_prime} and \eqref{eq:mu_prime} and
\begin{equation}
d'_{n'} = \sum_{n=0}^{I_{\max}}Q'_{n',n}
\label{eq-dprime}
\end{equation}

We use the following matrix notation: $\mt{D} \coloneqq \diag (d'_0,d'_1,\dots,d'_{I_{\max}})$,
$\mt{f}_{\cdot,k} = [f_{0,k}(\theta),\dots,f_{I_{\max},k}(\theta)]\T$,
 $\pmb{\lbar'} = [0,\lambda_1',\dots,\lambda_{I_{\max}}']\T$. $\mt{M}$ is the upper triangular matrix
\begin{equation*}
\mt{M} = \begin{cases*}
\mu_{ij}'&\text{for $i < j$}\\
0&\text{for $i \geq j$}
\end{cases*}
\end{equation*}
and
$\pmb{\Lambda}$ is the matrix with $\lambda_n'$ on the subdiagonal defined by
\begin{align}
\pmb{\Lambda}
\coloneqq
\begin{bmatrix}
  \mathbf{0} &  \hdots& \hdots& \mathbf{0}  \\
\lambda_1' & \ddots & & \vdots \\
\vdots& \ddots& \ddots &\vdots\\
\mathbf{0} & \hdots & \lambda_{I_{\max}}'&\mathbf{0}
\end{bmatrix}.
\label{eq:definition_of_Lambda}
\end{align}

for $i,j \in \{0,1,\dots,I_{\max}\}$,
We can rewrite the recursive relation of the conditional LST in  \eqref{eq:LST_of_next_departure_given_state_n_v2a} as
\begin{equation}
\mt{(\theta I + D) f_{\cdot,1} = \pmb{\lbar'} + M f_{\cdot,1}}
\label{eq:LST_of_next_departure_given_state_n_matrix_recursive}
\end{equation}
The previous equation can be solved and we obtain:
\begin{equation}
\mt{f_{\cdot,1} = (\theta I + D - M)^{-1} \pmb{\lbar'} }
\label{eq:LST_of_next_departure_given_state_n_matrix}
\end{equation}
Similarly, we can re-write \eqref{eq:LST_of_next_departure_given_state_n_v2b} as
\begin{equation}
\mt{(\theta I + D) f_{\cdot,k} = \pmb{\Lambda}f_{\cdot,k-1} + M f_{\cdot,k}}
\label{eq:LST_of_only_kth_departure_given_state_n_matrix_recursive}
\end{equation}
for $k\geq 2$. 
Now we can construct a block matrix form that takes \eqref{eq:LST_of_next_departure_given_state_n_matrix_recursive} as well as \eqref{eq:LST_of_only_kth_departure_given_state_n_matrix_recursive} to follow the form
\begin{align}
\scalemath{0.83}{
\begin{bmatrix}
\mt{\theta I + D} &  & \mathbf{0}  \\
& \ddots & \\
\mathbf{0} & &  \mt{\theta I + D}
\end{bmatrix}
\begin{bmatrix}
\mt{f_{\cdot,1}} \\
\vdots \\
\mt{f_{\cdot,I_{\max}}}
\end{bmatrix}
=
\begin{bmatrix}
\pmb{\lbar'}  \\
\vdots \\
\mathbf{0}
\end{bmatrix}
+
\begin{bmatrix}
\mt{M} &  & \mathbf{0}  \\
\pmb{\Lambda} & \ddots & \\
\mathbf{0} &\pmb{\Lambda}  & \mt{M}
\end{bmatrix}
\begin{bmatrix}
\mt{f_{\cdot,1}} \\
\vdots \\
\mt{f_{\cdot,I_{\max}}}
\end{bmatrix}
}
\label{eq:LST_of_kth_departure_given_state_n_matrix_recursive}
\end{align}

We can directly find the vector of conditional LST as
\begin{align}
\begin{bmatrix}
\mt{f_{\cdot,1}} \\
\vdots \\
\mt{f_{\cdot,I_{\max}}}
\end{bmatrix}
=
\begin{bmatrix}
\mt{\theta I + D-M} &  & \mathbf{0}  \\
\pmb{-\Lambda} & \ddots & \\
\mathbf{0} &\pmb{-\Lambda}  & \mt{\theta I + D-M}
\end{bmatrix} ^{-1}
\begin{bmatrix}
\pmb{\lbar'}  \\
\vdots \\
\mathbf{0}
\end{bmatrix}
\label{eq:LST_of_kth_departure_given_state_n_matrix}
\end{align}

Since the computation of \eqref{eq:LST_of_kth_departure_given_state_n_matrix} requires the inversion of a matrix of the order of $I_{\max}^2\times I_{\max}^2$ we show in the following how to calculate the conditional LST recursively from \eqref{eq:LST_of_kth_departure_given_state_n_matrix_recursive}.
We observe that $\mt{M- D  = Q' - \Lambda}$ where  $\mt{Q'}$ denotes the transition rate matrix of the continuous Markov chain associated with the reversed process $Z_t^r$.
We define $\mt{\Phi \coloneqq \theta I + D - M}$ and obtain the following recursion in block matrix form
\begin{align*}
\begin{bmatrix}
\mt{\Phi} &  & &  \\
\pmb{-\Lambda} & \ddots & \mbox{\Large $\mathbf{0}$}&\\
\mathbf{0} & \ddots& \ddots & \\
\mathbf{0} & \mathbf{0} & \pmb{-\Lambda}  & \mt{\Phi}
\end{bmatrix}
\begin{bmatrix}
\mt{f_{\cdot,1}} \\
\vdots \\
\vdots \\
\mt{f_{\cdot,I_{\max}}}
\end{bmatrix}
=
\begin{bmatrix}
\pmb{\lbar'}  \\
\vdots \\
\vdots \\
\mathbf{0}
\end{bmatrix}
\end{align*}
Now we obtain the conditional LST $\mt{f_{\cdot,n}}$ recursively with the initial condition
\begin{align}
\mt{f_{\cdot,1} = \Phi^{-1} \pmb{\lbar'}}
\label{eq:LST_of_1st_departure_given_state_n_initial_condition}
\end{align}
and for $k\geq 2$
\begin{align}
\mt{f_{\cdot,k} =  \Psi^{k-1}\Phi^{-1} \pmb{\lbar'}}
\label{eq:LST_of_kth_departure_given_state_n_rest_of_recursion}
\end{align}
where we used the shorthand notation $\mt{\Psi \coloneqq \Phi^{-1}\Lambda}$.

\begin{figure*}
\centering
  \begin{subfigure}[b]{0.49\linewidth}
    \includegraphics[width=\linewidth]{./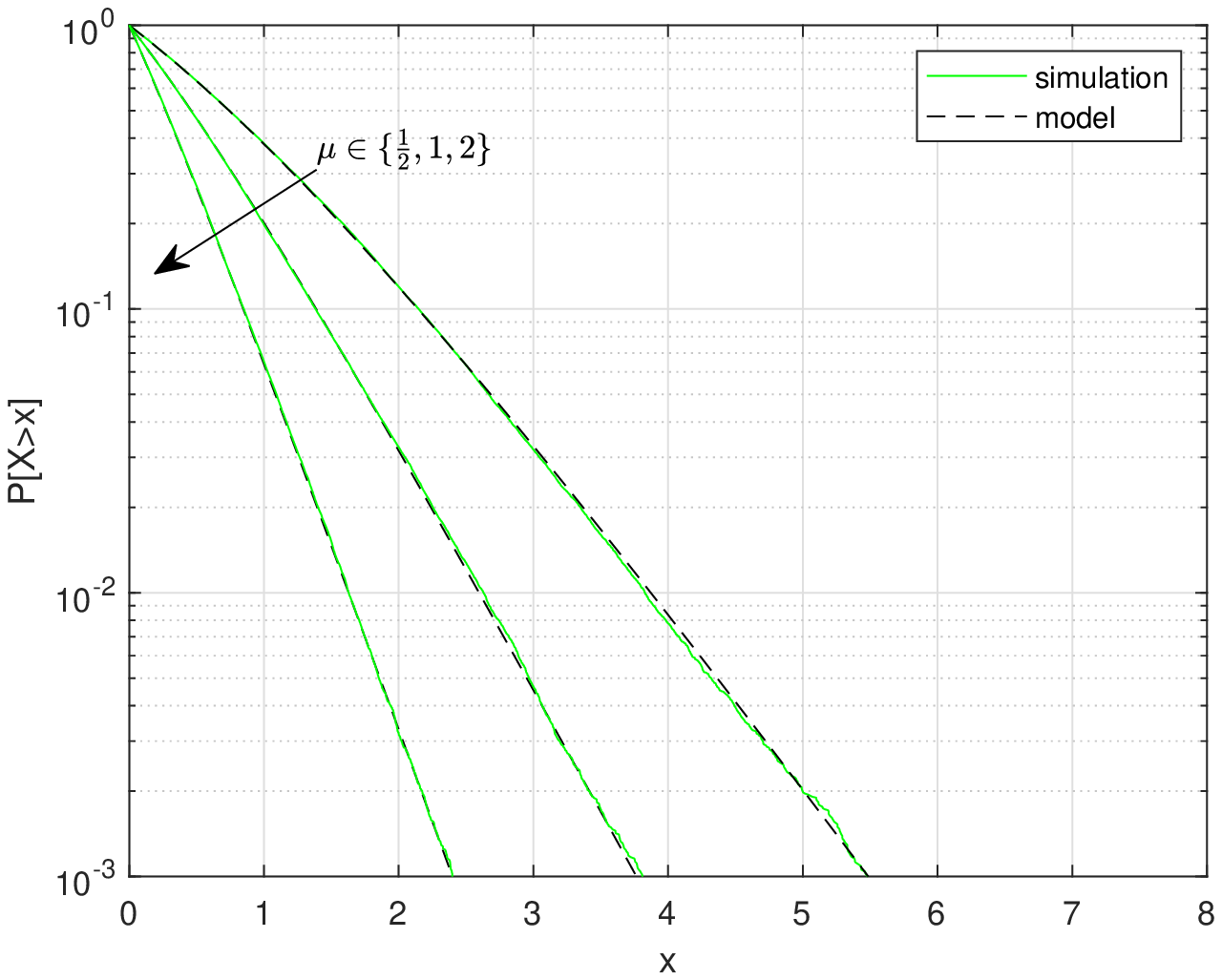}
    \caption{}
    \label{fig:ccdf_age_at_inform_2}
  \end{subfigure}
  \hfill
  \begin{subfigure}[b]{0.49\linewidth}
    \includegraphics[width=\linewidth]{./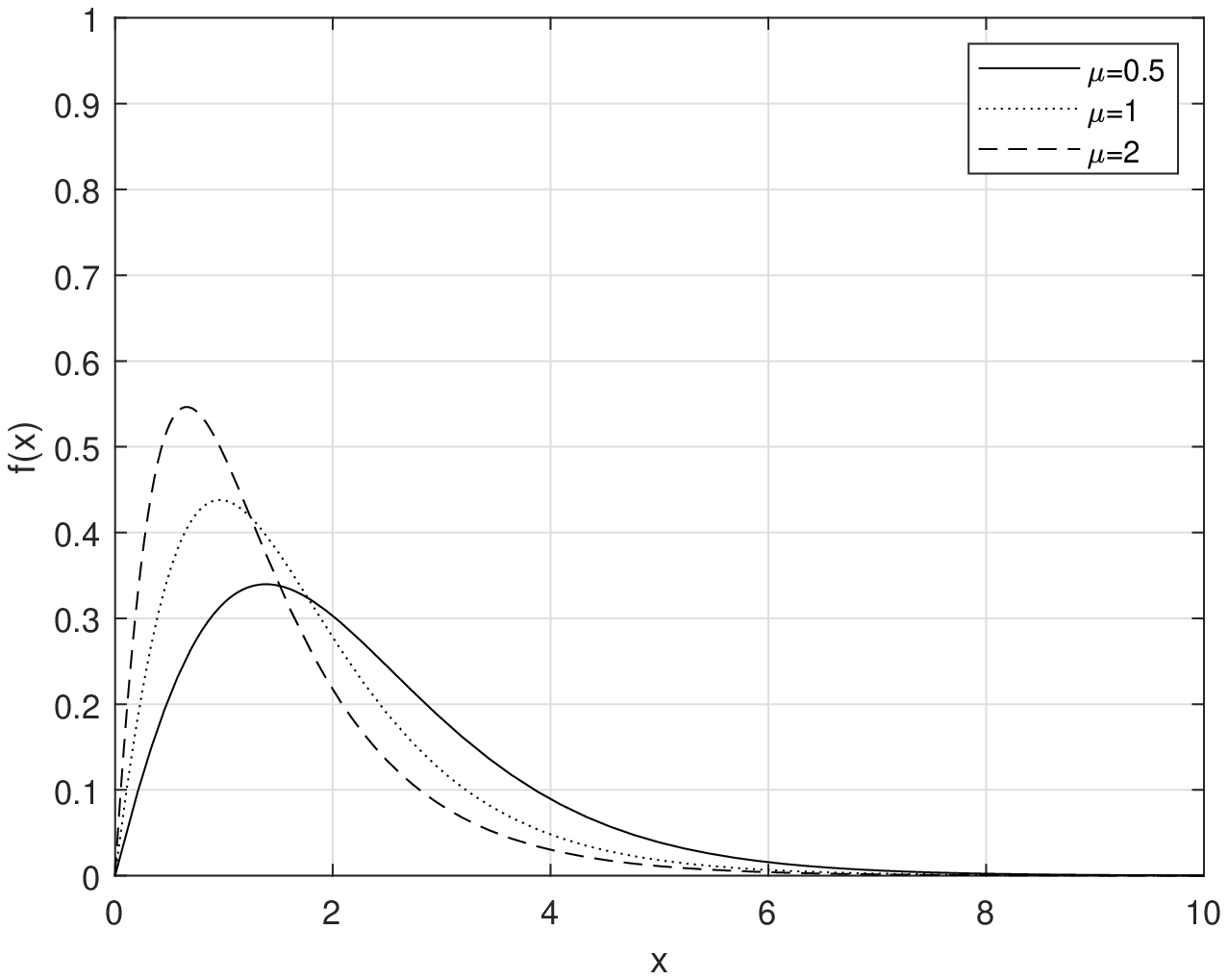}
    \caption{}
    \label{fig:pdf_age_at_anytime}
  \end{subfigure}
 \caption{(a) CCDF of the age at the arrival times of informative messages for arrival rate $\lambda=1$ and varying OWD parameter $\mu$. $I_{\max} = 20$. (b) Probability density of the age $f(x)$ at any point in time obtained from \eqref{eq:f_age_fct_of_pdf_at_arrival} for $\lambda=1$ and varying OWD parameter $\mu$. $I_{\max} = 20$.}
\end{figure*}

\subsection{Computing $f^\circ(x_0, t_1)$}
We can now put together the forward and backward elements. Let
$\hat{f}(\nu,\theta)$ denote the LST of $f^\circ(x_0,t_1)$, specifically
\begin{equation*}
  \hat{f}(\nu,\theta):=\int_{0}^{\infty}\int_{0}^{\infty}
  f^\circ(x_0,t_1) e^{-\nu t_1} e^{-\theta x_0} dt_1 dx_0
\end{equation*}
From \eqref{eq:forward-backward2} this becomes
\begin{align}
\label{eq:joint_LST_x_t}
\hat{f}(\nu,\theta)=\sum_{(n',n) \mst 1\leq n+1\leq n'\leq I_{\max}} p^\circ_{n',n}
f_{n',n+1}(\theta)\tilde{f}_n(\nu)
\end{align}
where $p^\circ_{n',n}$ is in \eqref{eq-p0},  $\tilde{f}_n(\nu)$ is the $n$th component of \eqref{eq:LST_of_next_arrival_given_state_nprime_matrix} and $f_{n',n+1}(\theta)$ is obtained by setting $k=n+1$ in \eqref{eq:LST_of_kth_departure_given_state_n_rest_of_recursion}.

As all the Laplace-Stieltjes transforms encountered here are rational fractions in
$\theta$ [resp. $\nu$], the distributions associated with them are matrix-exponential and can be computed in closed form given $\lambda, \mu$.
Specifically, we obtain
\begin{eqnarray}
  g^\circ(x_0|n',n)= \sum_{i=0}^{I_{\max}}\pi^{i,n',n}(x_0)e^{-x_0d'_i}
  \label{eq-g0-fin}
\end{eqnarray}
where $d'_i$ is given in \eqref{eq-dprime} and $\pi^{i,n',n}$ is a polynomial in $x_0$, the coefficients of which are computed numerically for every $(\lambda, \mu)$.
Similarly, we obtain
\begin{eqnarray}
  h(t_1|n)= \sum_{j=0}^{I_{\max}}\tilde{\pi}^{j,n}(t_1)e^{-t_1\tilde{d}_j}
  \label{eq:ht1_given_n}
\end{eqnarray}
where $\tilde{d}_j$ is given in \eqref{eq-dtilde} and $\tilde{\pi}^{j,n}$ is a polynomial in $t_1$, the coefficients of which are computed numerically for every $(\lambda, \mu)$.
Putting things together we obtain
\begin{align}
  f^\circ(x_0, t_1)&=\sum_{i=0}^{I_{\max}}\sum_{j=0}^{I_{\max}}e^{-x_0d'_i-t_1\tilde{d}_j}\label{eq-f0-fin}\\
  &\sum_{(n',n) \mst 1\leq n+1\leq n'\leq I_{\max}} p^\circ_{n',n} \pi^{i,n',n}(x_0) \tilde{\pi}^{j,n}(t_1)
  \nonumber
\end{align}




\section{Computing Age Performance Metrics}
\subsection{Age Distribution at Arbitrary Points in Time}

To obtain the PDF of the age at any point in time we insert the formulation of the PDF \eqref{eq-f0-fin} into \eqref{eq:f_age_fct_of_pdf_at_arrival}. To calculate this expression, we first show the calculation of a generic term that represents the core of this expression. We compute
\begin{align}
  & \int_0^x\int_{x-x_0}^{+\infty} x_0^k t_1^{\ell} e^{-d'x_0 -\tilde{d}t_1} dt_1 dx_0 \nonumber \\
   & = \int_0^x x_0^k e^{-d'x_0}  \left[-e^{-\tilde{d}t_1} \sum_{i=0}^{\ell}\frac{\ell!}{i!\tilde{d}^{l-i+1}}t_1^i\right]_{x-x_0}^{\infty} dx_0 \nonumber \\
   & = \int_0^x x_0^k e^{-d'x_0}  \left(e^{-\tilde{d}(x-x_0)} \sum_{i=0}^{\ell}\frac{\ell!}{i!\tilde{d}^{l-i+1}}(x-x_0)^i\right) dx_0 . \label{eq:generic_term_eq_f}
\end{align}
The expression in \eqref{eq:generic_term_eq_f} stems from the fact that a primitive of
$e^{-\tilde{d}t_1}P(t_1)$, with polynomial $P(t_1)=t_1^{\ell}$, is $-e^{-\tilde{d}t_1}\sum_{i=0}^{\mathrm{deg}(P)} \frac{P^{(i)}(t_1)}{\tilde{d}^{i+1}}$ where $P^{(i)}$ is the $i$th derivative of $P$ and $P^{(0)}=P$. This sum can be written in a compact form as $\sum_{i=0}^{\ell}\frac{\ell!}{i!\tilde{d}^{l-i+1}}t_1^i$. For the evaluation of the integral we used that $\lim_{t_1\rightarrow \infty} t_1^{\ell}e^{-\tilde{d}t_1} = 0$ for any fixed $\ell$ and positive $\tilde{d}$.

Using the binomial theorem to expand the term $(x-x_0)^i$ in the expression above we can rewrite \eqref{eq:generic_term_eq_f} as
\begin{align}
   & \frac{\ell! e^{-\tilde{d}x}}{\tilde{d}^{\ell+1}}  \int_0^x x_0^k e^{-(d'-\tilde{d})x_0}  \sum_{i=0}^{\ell}\frac{ \tilde{d}^{i}}{i!}(x-x_0)^i dx_0 \nonumber\\
   & = \frac{\ell! e^{-\tilde{d}x}}{\tilde{d}^{\ell+1}}  \int_0^x x_0^k e^{-(d'-\tilde{d})x_0}  \sum_{i=0}^{\ell} c_{i,\ell}(x) x_0^i dx_0 . \label{eq:generic_term_eq_f_calc}
\end{align}
where we expanded $(x-x_0)^i = \sum_{k=0}^{i}(-1)^k \binom{i}{k}x^{i-k}x_0^k$. Then we rearrange the sum terms in the first line to express it as a polynomial in $x_0$ with coefficients
\begin{equation}
  c_{i,\ell}(x) = \sum_{j=i}^{l} (-1)^i \binom{j}{i} x^{j-i} \frac{\ell!}{j!}
  \label{eq:coefficient_cell}
\end{equation}
with $j\geq i$. Here, we explicitly express the dependency of the coefficients on $x$ through $c_{i,\ell}(x)$.

In a last step to compute \eqref{eq:generic_term_eq_f_calc} we calculate for $d'\neq\tilde{d}$
\begin{align}
   & \frac{\ell! e^{-\tilde{d}x}}{\tilde{d}^{\ell+1}} \sum_{i=0}^{\ell} c_{i,\ell}(x)  \int_0^x x_0^{k+i} e^{-(d'-\tilde{d})x_0} dx_0 \nonumber\\
   & = \frac{\ell! e^{-\tilde{d}x}}{\tilde{d}^{\ell+1}} \sum_{i=0}^{\ell} c_{i,\ell}(x)  \left[-e^{-(d'-\tilde{d})x_0}\sum_{j=0}^{k+i}\frac{(k+i)!x_0^j}{j!(d'-\tilde{d})^{k+i-j+1}}\right]_{0}^{x} \nonumber\\
   & = \frac{\ell! e^{-\tilde{d}x}}{\tilde{d}^{\ell+1}} \sum_{i=0}^{\ell} c_{i,\ell}(x)  \left[\frac{(k+i)!}{(d'-\tilde{d})^{k+i+1}} \right.\nonumber\\
   & \left. -e^{-(d'-\tilde{d})x}\sum_{j=0}^{k+i}\frac{(k+i)!x^j}{j!(d'-\tilde{d})^{k+i-j+1}}\right]. \label{eq:generic_term_eq_f_calc_final}
\end{align}
For the case when $d'=\tilde{d}$ we obtain as a solution of \eqref{eq:generic_term_eq_f_calc}
\begin{align}\label{eq:generic_term_eq_f_calc_final_d_equal}
  \frac{\ell! e^{-\tilde{d}x}}{\tilde{d}^{\ell+1}} \sum_{i=0}^{\ell} c_{i,\ell}(x) \frac{x^{k+i+1}}{k+i+1}
\end{align}
%

Now, using the steps from above we insert the formulation of the PDF \eqref{eq-f0-fin} into \eqref{eq:f_age_fct_of_pdf_at_arrival} and obtain the PDF of the age at any point in time in the following theorem.
\begin{theorem}
In a stationary $M/M/I_{\max}/I_{\max}^*$  system, the PDF of the age of information at an arbitrary point in time, $f(x)$, is given by
\begin{align}
&f(x) = \hat{\lambda} \sum_{i=0}^{I_{\max}}\sum_{j=0}^{I_{\max}} \sum_{(n',n) \mst 1\leq n+1\leq n'\leq I_{\max}} p^\circ_{n',n} \,e^{-x\tilde{d}_j}  \, \times\nonumber\\
  &\mkern-38mu\sum_{l=0}^{\tilde{\xi}_{j,n}+\xi_{i,n',n}} \mkern-28mu \tilde{c}_l(x) \mkern-4mu \left(\mkern-4mu\frac{l!}{{(d'_i-\tilde{d}_j)}^{l+1}} - e^{-x (d'_i-\tilde{d}_j)}\mkern-4mu\sum_{v=0}^{l}\frac{l!}{v!{(d'_i-\tilde{d}_j)}^{l-v+1}}x^{v}\mkern-4mu\right)
    \label{eq:f_age_any_point_in_time_complete}
\end{align}
 where
 \begin{itemize}
   \item $\lambda$ is the message generation rate at the sender, $1/\mu$ is the mean message transit time and $I_{\max}$ is the maximum number of messages in transit;
   \item $p^\circ_{n',n}$ from \eqref{eq-p0}, $d'_i$ from \eqref{eq-dprime}, and $\tilde{d}_j$ from \eqref{eq-dtilde};
   \item $\xi_{i,n',n},\tilde{\xi}_{j,n}$ are the degrees of the polynomials ${\pi}^{i,n',n}(x_0)$ and $\tilde{\pi}^{j,n}(t_1)$ from \eqref{eq-g0-fin} and \eqref{eq:ht1_given_n}. Specifically, these are given as $$\pi^{i,n',n}(x_0):=\sum_{k=0}^{\xi_{i,n',n}} a_k^{i,n',n} x_0^k$$ and $$\tilde{\pi}^{j,n}(t_1):=\sum_{k=0}^{\tilde{\xi}_{j,n}} \tilde{a}_k^{j,n} t_1^k.$$
   \item $\tilde{c}_l(x)$ are the polynomial coefficients obtained through the convolution
   \begin{equation}\label{eq:cell_final}
     \tilde{c}_l(x) = \sum_{v=0}^{l} \tilde{z}_v^{j,n}(x) a_{l-v}^{i,n',n}
   \end{equation}
   with $\tilde{z}_k^{j,n}(x) = \sum_{i=k}^{\tilde{\xi}_{j,n}} \tilde{a}_i^{j,n} c_{k,i}(x)$, where $c_{k,i}(x)$ is given in \eqref{eq:coefficient_cell}.
 \end{itemize}
\end{theorem}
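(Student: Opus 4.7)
The plan is to substitute the explicit form of the Palm joint density $f^\circ(x_0,t_1)$ from \eqref{eq-f0-fin} into the inversion identity \eqref{eq:f_age_fct_of_pdf_at_arrival} and reduce the resulting double integral to the generic term already computed in \eqref{eq:generic_term_eq_f_calc_final}. The whole argument is an exercise in linearity of integration plus careful index bookkeeping; no new probabilistic content is introduced beyond what is contained in \eqref{eq-f0-fin} and in the computation of the generic integral that precedes the theorem statement.

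First, I would expand the polynomials $\pi^{i,n',n}(x_0) = \sum_{k=0}^{\xi_{i,n',n}} a_k^{i,n',n} x_0^k$ and $\tilde\pi^{j,n}(t_1) = \sum_{\ell=0}^{\tilde\xi_{j,n}} \tilde a_\ell^{j,n} t_1^\ell$ inside the integrand of \eqref{eq:f_age_fct_of_pdf_at_arrival}, interchange the (finite) sums with the double integral, and observe that each remaining integral has exactly the form $\int_0^x \int_{x-x_0}^{\infty} x_0^k t_1^{\ell} e^{-d'_i x_0 - \tilde d_j t_1}\,dt_1\,dx_0$ evaluated in \eqref{eq:generic_term_eq_f_calc_final}. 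The degenerate case $d'_i = \tilde d_j$ is handled by \eqref{eq:generic_term_eq_f_calc_final_d_equal} and can alternatively be recovered as a removable-singularity limit in the generic expression, so I would keep it implicit rather than carrying it as a separate branch.

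Next I would perform the decisive index relabeling. In the output of \eqref{eq:generic_term_eq_f_calc_final}, the bracket factor depends on the summation indices $k$ and $v$ only through the sum $l := k + v$, while $c_{v,\ell}(x)$ from \eqref{eq:coefficient_cell} depends on the pair $(v,\ell)$. Introducing $l$ as a new summation variable and summing along diagonals $k + v = l$ causes the $l$-bracket to factor out, while the remaining coefficient collapses to a convolution of the two polynomial coefficient families. Defining $\tilde z_v^{j,n}(x) := \sum_{\ell\ge v} \tilde a_\ell^{j,n} c_{v,\ell}(x)$, this coefficient is exactly $\tilde c_l(x) = \sum_{v=0}^{l} \tilde z_v^{j,n}(x)\,a_{l-v}^{i,n',n}$ from \eqref{eq:cell_final}, and the new diagonal index $l$ ranges over $0,\ldots,\xi_{i,n',n} + \tilde\xi_{j,n}$, matching the summation bounds in the theorem. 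Pulling $e^{-x\tilde d_j}$ (which is independent of $k,\ell,v$) outside the inner sums and reinserting $\hat\lambda\,p^\circ_{n',n}$ then gives \eqref{eq:f_age_any_point_in_time_complete}.

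The main obstacle is not conceptual but combinatorial: with three nested indices $(k,\ell,v)$ being collapsed onto the single diagonal $l = k+v$, it is easy to misidentify summation bounds or produce an equivalent but differently-indexed expression. The bookkeeping check that $\tilde c_l(x)$ assembled diagonal-by-diagonal from \eqref{eq:generic_term_eq_f_calc_final} really coincides with \eqref{eq:cell_final}, and that no prefactor from \eqref{eq:generic_term_eq_f_calc_final} has been dropped when absorbed into $\tilde z_v^{j,n}(x)$, is the step I would spend the most time verifying.
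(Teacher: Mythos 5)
Your proposal follows the paper's proof step for step: substitute \eqref{eq-f0-fin} into the Palm inversion \eqref{eq:f_age_fct_of_pdf_at_arrival}, interchange the finite sums with the double integral, evaluate each term via the pre-computed generic integral \eqref{eq:generic_term_eq_f_calc_final}, and collapse the nested polynomial indices onto the diagonal $l = k + v$ to recover the convolution coefficients $\tilde c_l(x)$ in \eqref{eq:cell_final}. The only difference is presentational — the paper first carries out the inner $t_1$ integral to produce a polynomial in $x_0$ with coefficients $\tilde z_k^{j,n}(x)$ and then forms the polynomial product, whereas you invoke the full double-integral identity before relabeling — but the resulting bookkeeping and final expression are identical, and your flagged concern about tracking the $\ell!/\tilde d_j^{\ell+1}$ prefactor through the definitions of $c_{i,\ell}(x)$ and $\tilde z_v^{j,n}(x)$ is exactly the delicate point in the paper's own derivation.
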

\begin{proof}
By inserting the formulation of the PDF \eqref{eq-f0-fin} into \eqref{eq:f_age_fct_of_pdf_at_arrival} we obtain the PDF of the age at any point in time as
\begin{align}
  &f(x) = \hat{\lambda} \int_{0}^{x}  \int_{x-x_0}^{\infty} f^\circ(x_0,t_1) dt_1 dx_0 \nonumber\\
  &=\hat{\lambda} \sum_{i=0}^{I_{\max}}\sum_{j=0}^{I_{\max}} \sum_{(n',n) \mst 1\leq n+1\leq n'\leq I_{\max}} p^\circ_{n',n} \, \times\nonumber\\
  &\int_{0}^{x} e^{-x_0d'_i} \pi^{i,n',n}(x_0) e^{-(x-x_0)\tilde{d}_j} \, \times \nonumber\\
  &\int_{x-x_0}^{\infty} \tilde{\pi}^{j,n}(t_1) e^{-t_1\tilde{d}_j}  dt_1    dx_0
  \nonumber\\
  &=\hat{\lambda} \sum_{i=0}^{I_{\max}}\sum_{j=0}^{I_{\max}} \sum_{(n',n) \mst 1\leq n+1\leq n'\leq I_{\max}} p^\circ_{n',n} \, \times\nonumber\\
  &\int_{0}^{x} e^{-x_0d'_i} \sum_{k=0}^{\xi_{i,n',n}} a_k^{i,n',n} x_0^k e^{-(x-x_0)\tilde{d}_j} \, \times \nonumber\\
  &\int_{x-x_0}^{\infty} \sum_{k=0}^{\tilde{\xi}_{j,n}} \tilde{a}_k^{j,n} t_1^k e^{-t_1\tilde{d}_j}  dt_1    dx_0
  \label{eq:intermediate_1_pi_substituted}
\end{align}
Looking closely at \eqref{eq:intermediate_1_pi_substituted} after rearranging terms and swapping the sum in $\tilde{\pi}^{j,n}(t_1)$ and the integral over $t_1$ we observe that at the core of the problem we need to compute the expression in \eqref{eq:generic_term_eq_f}. The additional complexity compared to the result in \eqref{eq:generic_term_eq_f_calc_final} arises due to the sums in $\tilde{\pi}^{j,n}(t_1)$ and $\pi^{i,n',n}(x_0)$.
In the next step, we evaluate the second integral in \eqref{eq:intermediate_1_pi_substituted} using the same method as for \eqref{eq:generic_term_eq_f} to obtain
\begin{align}
  &\hat{\lambda} \sum_{i=0}^{I_{\max}}\sum_{j=0}^{I_{\max}} \sum_{(n',n) \mst 1\leq n+1\leq n'\leq I_{\max}} p^\circ_{n',n} \, \times\nonumber\\
  &\int_{0}^{x} e^{-x_0d'_i} \sum_{k=0}^{\xi_{i,n',n}} a_k^{i,n',n} x_0^k \sum_{k=0}^{\tilde{\xi}_{j,n}}\tilde{a}_k^{j,n} \sum_{v=0}^{k}\frac{k!}{v!\tilde{d}_j^{k-v+1}}(x-x_0)^{v} dx_0\nonumber\\
  &=\hat{\lambda} \sum_{i=0}^{I_{\max}}\sum_{j=0}^{I_{\max}} \sum_{(n',n) \mst 1\leq n+1\leq n'\leq I_{\max}} p^\circ_{n',n} \, \times\nonumber\\
  &\int_{0}^{x} e^{-x_0d'_i} \sum_{k=0}^{\xi_{i,n',n}} a_k^{i,n',n} x_0^k \sum_{k=0}^{\tilde{\xi}_{j,n}}\tilde{z}_k^{j,n}(x) x_0^k dx_0 .
  \label{eq:intermediate_2_pi_substituted}
\end{align}
Here, in the second line we expanded $(x-x_0)^{v}$ in the same way as in \eqref{eq:generic_term_eq_f_calc}. The difference to \eqref{eq:generic_term_eq_f_calc} stems from the additional sum leading to the intermediate form $\sum_{k=0}^{\tilde{\xi}_{j,n}}\tilde{a}_k^{j,n} \sum_{i=0}^{k} c_{i,k}(x) x_0^i$ after using the expansion in \eqref{eq:generic_term_eq_f_calc}.
Now, after collecting the terms we can rewrite this sum as $\sum_{k=0}^{\tilde{\xi}_{j,n}}\tilde{z}_k^{j,n}(x) x_0^k$ with $\tilde{z}_k^{j,n}(x) = \sum_{i=k}^{\tilde{\xi}_{j,n}} \tilde{a}_i^{j,n} c_{k,i}(x)$, where $c_{k,i}(x)$ is given in \eqref{eq:coefficient_cell}.

Inspecting \eqref{eq:intermediate_2_pi_substituted}, we see that the product of the two given polynomials in $x_0$ can be rewritten as one polynomial $\sum_{l=0}^{\tilde{\xi}_{j,n}+\xi_{i,n',n}} \tilde{c}_l(x) x_0^l$ with $\tilde{c}_l(x) = \sum_{v=0}^{l}\tilde{z}_v^{j,n}(x)a_{l-v}^{i,n',n}$, i.e., the convolution of the coefficients of the two polynomials.
Equipped with the integral evaluation in \eqref{eq:generic_term_eq_f_calc_final} we can now compute
\begin{align}
  &\hat{\lambda} \sum_{i=0}^{I_{\max}}\sum_{j=0}^{I_{\max}} \sum_{(n',n) \mst 1\leq n+1\leq n'\leq I_{\max}} p^\circ_{n',n} \, \times\nonumber\\
  &\int_{0}^{x} e^{-x_0d'_i} \sum_{l=0}^{\xi_{i,n',n}+\tilde{\xi}_{j,n}} \tilde{c}_l(x) x_0^l dx_0\nonumber\\
  &=\hat{\lambda} \sum_{i=0}^{I_{\max}}\sum_{j=0}^{I_{\max}} \sum_{(n',n) \mst 1\leq n+1\leq n'\leq I_{\max}} p^\circ_{n',n} \,e^{-x\tilde{d}_j}  \, \times\nonumber\\
&\mkern-38mu\sum_{l=0}^{\tilde{\xi}_{j,n}+\xi_{i,n',n}} \mkern-28mu \tilde{c}_l(x) \mkern-4mu \left(\mkern-4mu\frac{l!}{{(d'_i-\tilde{d}_j)}^{l+1}} - e^{-x (d'_i-\tilde{d}_j)}\mkern-4mu\sum_{v=0}^{l}\frac{l!}{v!{(d'_i-\tilde{d}_j)}^{l-v+1}}x^{v}\mkern-4mu\right)\nonumber\\
\end{align}

\end{proof}

\subsection{Age Distribution at the Arrival of Informative Messages}

In addition to calculating the age distribution at any point in time we can easily calculate the distribution of the age at the \emph{arrival instants of informative messages}. The corresponding PDF $f_A^\circ(x_0)$ is readily obtained as

\begin{equation*}
  f_A^\circ(x_0)=\sum_{(n',n) \mst 1\leq n+1\leq n'\leq I_{\max}} p^\circ_{n',n} g^\circ(x_0|n',n)
\end{equation*}
Using \eqref{eq-g0-fin} we obtain
\begin{align}
  f_A^\circ(x_0)=\sum_{i=0}^{I_{\max}}e^{-x_0d'_i} \mkern-18mu
  \sum_{(n',n) \mst 1\leq n+1\leq n'\leq I_{\max}} p^\circ_{n',n} \pi^{i,n',n}(x_0)
 \label{eq-ff0-fin}
\end{align}


\begin{figure*}
\centering
  \begin{subfigure}[b]{0.49\linewidth}
    \includegraphics[width=\linewidth]{./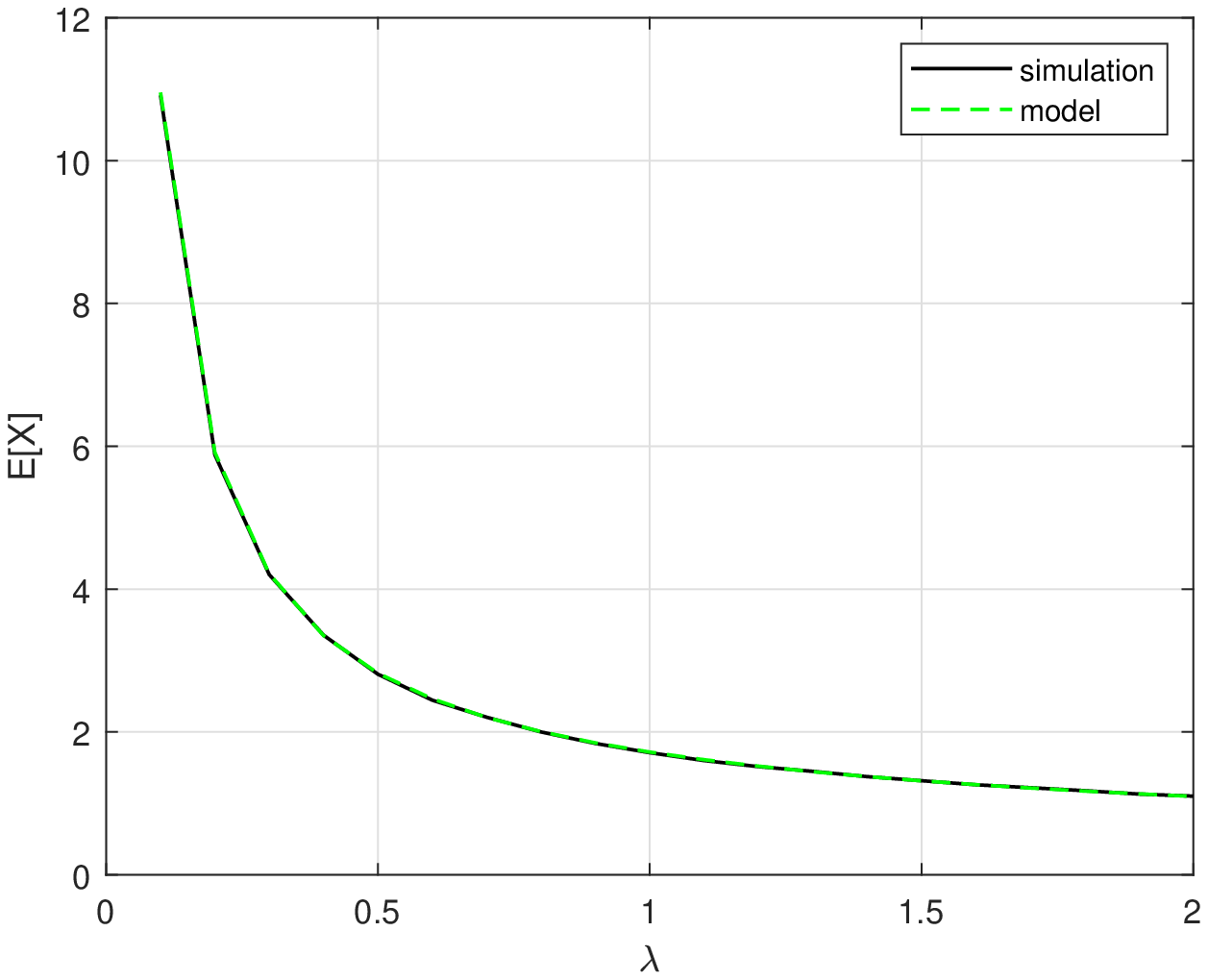}
    \caption{}
    \label{fig:mean_age_anytime_I_20}
  \end{subfigure}
  \hfill
    \begin{subfigure}[b]{0.49\linewidth}
    \includegraphics[width=\linewidth]{./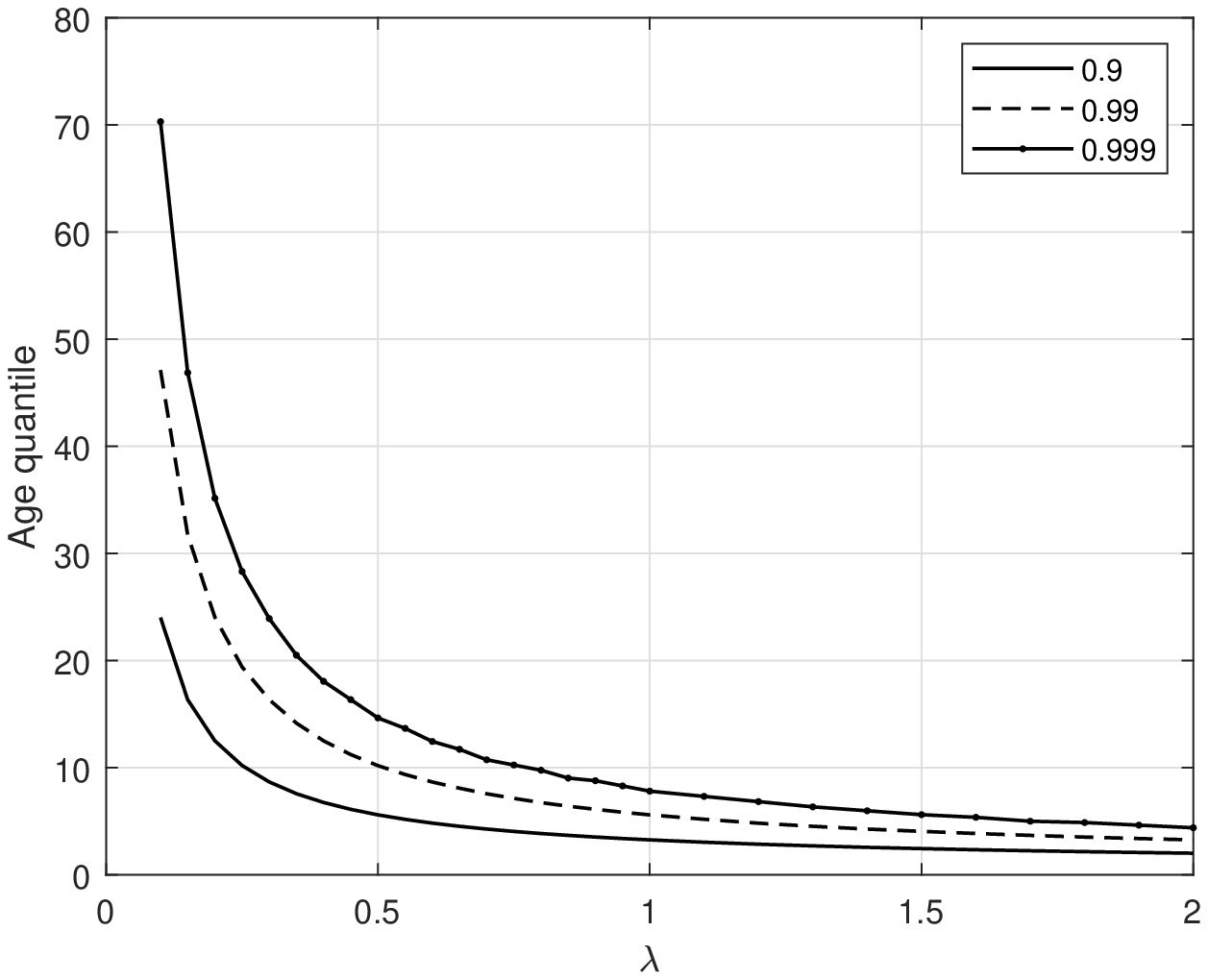}
    \caption{}
    \label{fig:quantile_age_anytime_I_20}
  \end{subfigure}
 \caption{(a)  Expected age $E[X]$ obtained from simulations compared to the model in \eqref{eq:expectation_test_fct_age} with $\varphi$ being the identity function for $I_{\max}=20$ and $\mu=1$. (b) Quantiles of the age $P[X>x_{\varepsilon}]=\varepsilon$ obtained from integrating the age density in \eqref{eq:f_age_fct_of_pdf_at_arrival} for $I_{\max}=20$ and $\mu=1$.}
\end{figure*}
%
%
%
%

\vspace{-10pt}
\section{Evaluation}
\label{sec:evaluation}

In this section, we compare the obtained expressions for the age distributions to results from empirical discrete event simulations.
We consider the system as described in Sect.~\ref{sec:palm_model} with messages arriving as a Poisson process with rate $\lambda$ where each message observes a service time sampled from an exponential distribution with parameter $\mu$.
The system simulation results are obtained from simulation runs over $10^5$ messages and we set the number of non-obsolete messages under way to $I_{\max}=20$ .

Figure~\ref{fig:ccdf_age_at_inform_2} shows the age distribution
at the arrival time points of informative messages. The dashed curve is obtained from the model \eqref{eq:expectation_test_fct_age} (with test function $\varphi$ being the identity function). The figure also shows the empirical age distribution obtained from simulations. We observe that these two distributions match very well and the impact of the average service rate of one message $\mu$ on the tail of the age distribution.
Fig.~\ref{fig:pdf_age_at_anytime} shows the probability density of the age at any point in time that is obtained from \eqref{eq:f_age_fct_of_pdf_at_arrival} using the Laplace inverse of \eqref{eq:joint_LST_x_t}. Observe the skewness of the density function.
This shows that approximations based on the first few (two) moments, e.g. obtained based on work that calculate the moments of the age distribution~\cite{Yates:MDS:TINT20}, will be inaccurate.

Figure~\ref{fig:mean_age_anytime_I_20} shows a comparison of the expected age at any point in time as a function of the message arrival rate $\lambda$.
The expected age that is obtained from the model is computed using the density in \eqref{eq:f_age_fct_of_pdf_at_arrival} in closed form.

To empirically obtain the average age from the event based simulation we utilize the Palm inversion formula \eqref{eq:palm_inversion_formula} with $\varphi$ set as identity function. 
Hence we can write
    \begin{align}
    &\E\left[X_t\right] = \mu \bar{N} \E^\circ\left[\int_{T_0}^{T_1}X_s ds \right] = \mu \bar{N} \E^\circ\left[\int_{0}^{T_1}(A_0+s) ~ds \right] \nonumber \\
    &= \mu \bar{N} \E^\circ\left[A_0(T_1-T_0) + \frac{1}{2} (T_1-T_0)^2 \right]
    \label{eq:plam_expectation}
  \end{align}
where $A_0$ is the age of the informative message received at time $T_0$. The estimate of $\E\left[X_t\right]$ obtained from one simulation run is
\begin{equation}
      \mu \hat{N}
      \sum_{n=1}^{n_{\mathrm{tot}}-1}
      \left( A_n(T_{n+1}-T_n)+\frac12 (T_{n+1}-T_n)^2
      \right)
  \end{equation}
where $A_n$ [resp. $T_n$] is the age upon delivery at the receiver [resp. delivery time] of the $n$th informative message, $n_{\mathrm{tot}}$ is the total number of informative messages delivered in the simulation run, and $\hat{N}$ is the time-average number of messages in the channel.

Here too, the comparison with the empirically obtained average age from \eqref{eq:plam_expectation} shows a close match. Note that the empirically obtained average age still requires invoking the Palm inversion formula \eqref{eq:palm_inversion_formula} as given in \eqref{eq:plam_expectation}.
We observe in Fig.~\ref{fig:mean_age_anytime_I_20} that the expected age  decreases monotonically with the message arrival rate $\lambda$. This stand in line with similar age models with finite message capacity assuming, however, FIFO message delivery, such as in \cite{Kam16}.
In Fig.~\ref{fig:quantile_age_anytime_I_20} we show the quantiles of the age at any point in time based on the age probability density in \eqref{eq:f_age_fct_of_pdf_at_arrival}. These quantiles can be utilized for system dimensioning by providing operating points, in terms of setting the service rate $\mu$ or throttling the message generation rate $\lambda$, to retain a corresponding age quantile $x_{\varepsilon}$ that is only violated with probability $P[X>x_{\varepsilon}]=\varepsilon$.

\section{Related Work}

The problem of status updating to combat data staleness in distributed systems that use a shared and unreliable network was first discussed in the context of  real-time database systems in \cite{songliu}.
Essentially, a recent reincarnation of this problem in the context of IoT that is known as Age of information (AoI) considers transmission scheduling strategies to update the status at some receiver in a way that optimizes the freshness of that information~\cite{Kaul12,sun18,ioannidis09}. This problem has been in particular of interest in the context of vehicular networks~\cite{kaul11,kaul11b} and  sensory information transmission over wireless networks~\cite{He:18,Hribar:17} as the freshness of information such as the captured environment model that is exchanged between vehicles is safety critical. For a comprehensive review see~\cite{YatesSBKMU21a}.


Given a single sender and a system modeled as an M/M/1 queue the work in \cite{Kaul12} derives the \emph{sample path average age} at the receiver as $\lim_{T\rightarrow\infty} \frac{1}{T}\int_{0}^{T}\Delta(t) dt = \lambda\left(E[XS]+E[X^2]/2\right)$ with $\Delta(t)$ denoting the age of information at the receiver at time $t$, $\lambda$ being the message arrival rate and the random variables $X,S$ that denote the message inter-arrival time and message response time, respectively.
The same seminal work provides expressions for the sample path average age in a D/M/1 system using a transcendental function.
Given the forms derived above the work \cite{Kaul12} also provides the parametrization that minimizes the average age.

Similarly, works such as \cite{Kam13,Kam16:ToIT} consider the age at the receiver given a $D/G/1$ and $M/M/\infty$ systems where messages may arrive out of order.
The work in \cite{Kam13} considers the distribution of the age process for a deterministic transmission schedule and a single server with general service time distribution under the FIFO assumption.
The authors derive in \cite{Kam16:ToIT} an expression for the \emph{average age} using a similar reasoning as \cite{Kaul12} as a function of the average message arrival rate and service time distribution.
Note that the provided expression there is not directly computable as it contains multiple infinite sums and infinite products.
Applications of the methods above to the special case of G/G/1/1 queueing systems under message blocking and message preemption is found in \cite{soysal21}.
In contrast to this work, however, our approach here provides the distribution of the age using a computable closed form that is constructed as the combination of Laplace-Stieltjes transforms of elementary functions

Going beyond elementary queues, the work in \cite{Talak2017MinimizingAI} considers the AoI for a path consisting of a concatenation of multiple links where the random delay at each of the links is only due random access. The authors show that given this delay model and a graph model of the network the problem of finding a transmission strategy for minimizing the AoI at $N$ sender-receiver pairs can be decomposed into a simpler equivalent optimization problem. We believe that the main reason for this lies in the delay due to random access model that does not incorporate queueing and scheduling effects.
A similar work considering multihop networks, i.e., \cite{Bedewy17}, that considers, however, a multihop queueing network, shows that a preemptive  Last Generated First Served (LGFS) policy at all nodes minimizes any non-decreasing functional of the age in stochastic dominance sense. This result is obtained under the assumption that all message transmission times are iid exponentially distributed at all nodes.

The works in \cite{Yates:multiple_source:TINT19,Yates:MDS:TINT20} provide a method to calculate the MGF and the moments of the AoI at a network monitor for networks of \emph{preemptive finite buffer servers} based on the so called stochastic hybrid system (SHS) framework~\cite{Hu:SHS:2000} by leveraging a notion of a hybrid state $[\{q(t)\}_{t\geq 0},\mathbf{x}(t)]$ where $\{q(t)\}_{t\geq 0}$ describes a continuous time Markov chain over finite state and $\mathbf{x}(t)\geq 0 \in \mathbb{R}^{1\times n}$ describes a vector of positive real values of the age.
By attaching deterministic matrices $\mathbf{A_l}\in\{0,1\}^{n\times n}$ to the transitions $l\in L$ of the Markov chain $\{q(t)\}_{t\geq 0}$, where $L$ denotes the set transitions, one is able to track the jumps of the age vector as $\mathbf{x'}=\mathbf{x}\mathbf{A_l}$.
The key to finding a formulation for the expected age and for the Moment generating function (MGF) of the age is based on a set of first order differential equations that assume $\{q(t)\}_{t\geq 0}$ is ergodic and utilize its steady state stationary probability distribution~\cite{Yates:multiple_source:TINT19,Yates:MDS:TINT20}
Note that the exists a direct relation between the presented SHS framework and utilizing the Master equation $\frac{d}{d t} E\left[\varphi(q_t,X_t)\right] = E\left[G \varphi(q_t,X_t)\right]$ with $\varphi$ being a test function, $G$ a generator and $q_t$ and $X_t$ describing the queue state, as well as, the age at time $t$ respectively.
This relation is explored in~\cite{Yates:multiple_source:TINT19,Yates:MDS:TINT20} to simplify the SHS formulation.
We note that, in general, applying the master equation to the infinitesimal generator that describes the jump and drift evolution of the age results in the Fokker-Plank equation describing the time evolution of the age density.
Analytical closed-form results to solve this formulation, even for the stationary age distribution, are yet to be shown.
Computable solutions to the presented SHS system of equations in are provided for the examples of a single M/M/1/1 queue, a line network of M/M/1/1 queues~in\cite{Yates:MDS:TINT20}.
Note that the SHS framework was applied in~\cite{Yates18} to obtain a close form for the expected age for a system of parallel servers where a new message arrival preempts the oldest message under way.

Concerning message reordering, a seminal work on packet reordering is \cite{BaccelliGP84} that provides a recursive expression for the total delay distribution of a system in which messages that arrive in order are delivered through a disordering system, hence the out-of-order arrivals require resequencing. The author provides an analytical solution for the case of an $M/G/\infty$ disordering system given in terms of a Laplace-Stieltjes transform.

A min-plus approach to the Age of Information is given in~\cite{Fidler21:AoI} showing that the virtual delay at a FIFO system, i.e., the horizontal deviation of cumulative arrivals and departures at a min-plus system, is an upper bound on the age. Equipped with lower bounds on the cumulative arrival traffic, the work in~\cite{Fidler21:AoI} shows deterministic and statistical upper bound on the age of information for different combinations of arrivals and systems with deterministic or probabilistic description.
In contrast to~\cite{Fidler21:AoI}, we consider here non-FIFO systems with possible message reordering.

\section{Conclusion}
\label{sec:conclusion} 

In this paper, we considered the problem of computing the distribution of the Age of Information (AoI) at any point in time for non-preemptive, non-FIFO systems.
Our key observation is that this networked system can be modeled as a batch queueing system where the served batch size is random and the sojourn time of the freshest message in the batch corresponds to the age of an arriving informative message. 
The batch (except for the its freshest message) essentially models the set of messages that are generated before the freshest message and hence are rendered obsolete by its arrival. 
This captures message reordering due to the non-FIFO system property.

Equipped with this queueing model we use Palm calculus together with time inversion to decompose the elements that form the joint distribution of the age and the time between the arrival of informative messages at the receiver.
Then, Palm inversion allows us to compute the distribution of the age at any point in time given this joint distribution.
We find recursions for the corresponding Laplace-Stieltjes transforms of the conditional age and informative message inter-arrival time distributions owing to the Markovian nature of the underlying model.
As these transforms turn to be rational we obtain a computable expression for the AoI distribution composed of matrix-exponential terms. 
This main result allows further to deduce formulations for the expected age, as well as, the age distribution at the arrival time points of informative messages.
We validate the exact model using discrete-event simulations and show the skewness of the PDFs of the age.
Further, we show the impact of the arrival and service rates on the age CDF and its quantiles.
We leave the extension of this model to multi-stage queueing networks to future work. 

\section{Appendix}
\label{sec:Appendix}

\subsection{Calculation of the steady state probabilities in \eqref{eq:steady_state_prob_chain_nr_msgs_underway}}
\label{sec:appendix_proof_steady_state_prob_fwd}

We consider the queueing model from Sect.~\ref{sec:underlying_model} with  the corresponding Markov chain as sketched in Fig.~\ref{fig:MC_fwd}. In the following we prove the formulation of the steady state probabilities $p_n$ of the given Markov chain through induction.
We first show that the formulation holds for $p_0$ and $p_1$. Then we show that given that the formulation holds for all $p_k$ for $k<n$ it also holds for $p_n$.

From the balance equations we can write $p_o\lambda = \sum_{i=1}^{I_{\max}}p_i\mu$. From the normalization condition $\sum_{i=0}^{I_{\max}}p_i=1$ we obtain $p = \frac{\mu}{\lambda+\mu}$ as $\sum_{i=1}^{I_{\max}}p_i =1-p_0$.
For $p_1$ we can write $p_1(\lambda+\mu) = p_0\lambda + \sum_{i=2}^{I_{\max}}p_i\mu$. Using $p_0$ and the normalization condition this reduces to $p_1\lambda = p_0 \lambda + \mu(1-p_0-p_1)-p_1\mu$ which leads to $p_1 = \frac{2\lambda\mu}{(\lambda+\mu)(\lambda+2\mu)}$.

Now, considering \eqref{eq:steady_state_prob_chain_nr_msgs_underway} we directly see that it holds for $n=0$ and $n=1$.
For a state $k$ of the given Markov chain we can write using the balance equations
\begin{align}
p_k(\lambda+k\mu) &= p_{k-1}\lambda+\mu\sum_{i=k+1}^{I_{\max}}p_i \nonumber\\
&= p_{k-1}\lambda+\mu(1- \sum_{i=0}^{k}p_i) \,,
\label{eq:steady_state_prob_fwd_general_recursion}
\end{align}
which we can rewrite as
\begin{align}
&p_k(\lambda+(k+1)\mu) \nonumber\\
&= p_{k-1}(\lambda-\mu)+\mu - \mu \sum_{i=0}^{k-2}p_i \nonumber\\
&= \frac{k\lambda^{k-1}\mu(\lambda-\mu)}{\prod_{j=1}^{k}(\lambda+j\mu)} + \mu - \mu \sum_{i=0}^{k-2}\frac{(i+1)\lambda^i\mu}{\prod_{j=1}^{i+1}(\lambda+j\mu)}\nonumber\\
&=\frac{k\lambda^{k-1}\mu(\lambda-\mu)\Gamma\left(\frac{\lambda+\mu}{\mu}\right)}{\mu^k\Gamma\left(\frac{\lambda+(k+1)\mu}{\mu}\right)} + \mu\nonumber\\
& - \mu\Gamma\left(\frac{\lambda+\mu}{\mu}\right)\left(\frac{\lambda+\mu}{\mu\Gamma\left(\frac{\lambda+2\mu}{\mu}\right)} - \frac{\lambda^{k-1}(k\mu+\lambda)}{\mu^k\Gamma\left(\frac{\lambda+(k+1)\mu}{\mu}\right)}\right) \,,
\label{eq:steady_state_prob_fwd_general_recursion2}
\end{align}
where we used the identity $\prod_{j=1}^{k}(\lambda+j\mu) = \frac{\mu^k\Gamma\left(\frac{\lambda+(k+1)\mu}{\mu}\right)}{\Gamma\left(\frac{\lambda+\mu}{\mu}\right)}$.
Now, we can further simplify \eqref{eq:steady_state_prob_fwd_general_recursion2} using an instance of this the identity $\Gamma\left(\frac{\lambda+2\mu}{\mu}\right) = \frac{\lambda+\mu}{\mu} \Gamma\left(\frac{\lambda+\mu}{\mu}\right)$.
Finally through rearranging terms we obtain
\begin{align}
p_k =  \frac{(k+1) \lambda^k\mu \Gamma\left(\frac{\lambda+\mu}{\mu}\right)}{\mu^k\Gamma\left(\frac{\lambda+(k+1)\mu}{\mu}\right)} \,,
\label{eq:steady_state_prob_fwd_final_result}
\end{align}
which completes the proof.
Calculating $p_k$ for $k = I_{\max}$ follows along using the balance equation as shown above.

\subsection{Calculation of the transition rates of the reversed process \eqref{eq:lambda_prime}, \eqref{eq:mu_prime}}

The transition rates for the reversed Markov process are obtained directly from [Theorem 1.12] from \cite{Kelly:Reversibility-2011} as $Q'(i,j) = \frac{p_j Q(j,i)}{p_i}, i,j \in E$ with the same steady state distribution $p_n n\in E$.
Now given the forward Markov process with transition rates in \eqref{eq:transition_rates_fwd_process} (as sketched in Fig.~\ref{fig:MC_fwd}) we obtain the following transition rates for the reverse process $Q'_{i,i-1} :=\lambda_{i}'$ for $i=1...I_{\max}$, $Q'_{i,j} :=\mu_{ij}'$ for $i=0...I_{\max}-1,i<j\leq I_{\max}$ and $Q'_{i,j} :=0$ otherwise.
Hence, we obtain from \eqref{eq:transition_rates_fwd_process} and \eqref{eq:steady_state_prob_chain_nr_msgs_underway} for $i<I_{\max}$
\begin{align}\label{eq:derivation_lambda_prime}
  \lambda_i' &= \lambda\frac{p_{i-1}}{p_{i}} = \lambda \frac{i\lambda^{i-1}\mu}{\prod_{j=1}^{i}\left(\lambda+j\mu\right)} \frac{\prod_{j=1}^{i+1}\left(\lambda+j\mu\right)}{\left(i+1\right)\lambda^i\mu}\nonumber\\
  &= \frac{i}{i+1}\left(\lambda+(i+1)\mu\right) \,.
\end{align}
For $i=I_{\max}$ the derivation goes accordingly to find $\lambda_i' = I_{\max}\mu$.
Similarly, for $i<j<I_{\max}$ we obtain
\begin{align}\label{eq:derivation_mu_prime}
  \mu_{ij}' &= \mu\frac{p_{j}}{p_{i}} = \mu \frac{(j+1)\lambda^{j}\mu}{\prod_{k=1}^{j+1}\left(\lambda+k\mu\right)} \frac{\prod_{k=1}^{i+1}\left(\lambda+k\mu\right)}{\left(i+1\right)\lambda^i\mu}\nonumber\\
  &= \frac{(j+1) \mu \lambda^{j-i}}{(i+1)\prod\limits_{k=i+2}^{j+1}\left(\lambda + k\mu\right)} \,.
\end{align}
Again, for $j=I_{\max}$ the derivation goes similarly.

\subsection{On the numerical calculation of the LSTs in \eqref{eq:joint_LST_x_t}}
\label{app-inv}

For completeness, we show in the following an alternative method to the direct calculation of the conditional LST in \eqref{eq:joint_LST_x_t} that we used in this paper.
We underline that the following numerical computation may be beneficial in speeding up computations especially for evaluation purposes.

The direct computation of \eqref{eq:joint_LST_x_t} by computing the conditional LST vectors \eqref{eq:LST_of_kth_departure_given_state_n_rest_of_recursion} through recursion and matrix inversions, as well as, the LST vector in \eqref{eq:LST_of_next_arrival_given_state_nprime_matrix} and the following insertion of the vector components $f_{n,k}$ and $\tilde{f}_{n}$ into \eqref{eq:joint_LST_x_t} becomes computationally intensive when $I_{\max}$ is large.
The reason for this is the computation of the matrix inverse $\mt{\Phi^{-1}}$ in \eqref{eq:LST_of_1st_departure_given_state_n_initial_condition} as well its exponentiation in form of $\mt{\Psi^{k-1}}$ in \eqref{eq:LST_of_kth_departure_given_state_n_rest_of_recursion}.
Next we discuss an alternative numerical method to compute the quantities in \eqref{eq:LST_of_1st_departure_given_state_n_initial_condition} - \eqref{eq:LST_of_kth_departure_given_state_n_rest_of_recursion}.

First, we recognize that $\mt{\Phi^{-1} = (\theta I + D - M)^{-1}}$ used in \eqref{eq:LST_of_1st_departure_given_state_n_initial_condition} is a fraction by the adjugate matrix formula, i.e.,
\begin{equation}\label{eq:invphi_fraction}
  \mt{\Phi^{-1}} = \frac{\mt{R}(\theta)}{\rho(\theta)} \,,
\end{equation}
where $\mt{R}(\theta)$ is a polynomial with matrix coefficients given by
\begin{equation}\label{eq:R_polynomial}
  \mt{R}(\theta) = \sum_{k=0}^{I_{\max}}\theta^k\mt{P}_k \,,
\end{equation}
and the denominator is the characteristic polynomial of $\mt{M-D}$, i.e.
\begin{equation}\label{eq:denom_rho}
  \rho(\theta):=\sum_{k=0}^{I_{\max}+1}r_k\theta^k = \det\mt{(\Phi)}
\end{equation}
Also observe that $r_0 = \det\mt{(D-M)}$ and that $r_{I_{\max}+1} = 1$.
Note that the matrices $\mt{P}_k$ are of dimensions $(I_{\max}+1) \times (I_{\max}+1)$ large. Next, we obtain $\mt{P}_k$ iteratively using LeVerrier's method. In a nutshell, we plug \eqref{eq:invphi_fraction} into the identity $\mt{\Phi^{-1} \Phi = I}$ and rearrange the terms to obtain
\begin{equation}\label{eq:detthetaM}
  \sum_{k=1}^{I_{\max}+1} \theta^k \mt{P}_{k-1} + \sum_{k=0}^{I_{\max}}\theta^k \mt{P}_k \mt{(D-M)} =
  \left(\sum_{k=0}^{I_{\max}+1}r_k\theta^k\right)\mt{ I }  \,.
\end{equation}
Now, we can compare the coefficients of $\theta^k$ on both sides of \eqref{eq:detthetaM} and obtain the recursive form for the matrices $\mt{P}_k$
as
\begin{equation}\label{eq:Pkrecursive}
  \mt{P}_k= r_{k+1}\mt{I} - \mt{P}_{k+1}\mt{(D-M)} \,,
\end{equation}
for $k\in\{I_{\max-1},..,1\}$.
From the comparison of the coefficients in \eqref{eq:detthetaM} we know that $\mt{P}_0= \left(\det(\mt{D-M})\right)(\mt{D-M})^{-1}$ and $\mt{P}_{I_{\max}}=\mt{I}$ such that we can iteratively find the matrices $\mt{P}_k$ using \eqref{eq:Pkrecursive}, hence, calculate the coefficients of $\mt{R(\theta)}$.

Now, given that we calculate $\mt{\Phi^{-1}}$ using the method above we can use this result to simplify the matrix multiplication in $\mt{\Psi^{k}}$ as $\mt{\Psi= \Phi^{-1}\Lambda}$. Hence, we can write
\begin{equation}\label{eq:psipowerk}
  \mt{\Psi}^k = \frac{\tilde{\mt{R}}(\theta)^k}{\rho(\theta)^k} \,,
\end{equation}
where we used the polynomial $\tilde{\mt{R}}(\theta)$ that is defined as
\begin{equation}\label{eq:Rtildepolynomial}
  \tilde{\mt{R}}(\theta) = \sum_{k=0}^{I_{\max}}\theta^k\mt{\tilde{P}_k} \,,
\end{equation}
with $\mt{\tilde{P}_k = P_k \Lambda}$.
Now calculating the denominator of \eqref{eq:psipowerk} is simple as $\mt{\Phi}$ is triangular and its determinant is obtained in closed form as
\begin{equation}\label{eq:Rtildepolynomial}
  \det\mt{(\Phi)} = \prod_{i=0}^{I_{\max}} \left(\theta+\sum_{j}Q'_{i,j}\right) \,.
\end{equation}
As $\mt{\tilde{R}}(\theta)^k$ is a product of polynomials with matrix coefficients, we calculate the numerator in \eqref{eq:psipowerk} using an iterative convolution operation of the coefficients $\mt{\tilde{P}_k}$.
Now, given the calculation method above we can numerically obtain $\mt{\Psi^{k}}$ for insertion in \eqref{eq:LST_of_kth_departure_given_state_n_rest_of_recursion}.
Note that the same procedure can be used to obtain the elements $\tilde{f}_{n}(\nu)$ in \eqref{eq:joint_LST_x_t} by numerically calculating the inversion in \eqref{eq:LST_of_next_arrival_given_state_nprime_matrix}.

Finally, calculating the inverse Laplace transform of the Palm joint density $f^\circ(t_1,x_0)$ entails taking the inverse Laplace transform of the right hand side (RHS) of \eqref{eq:joint_LST_x_t}.
Given the factorization of $\rho(\theta)$ and the matrix coefficient form of the polynomial $\mt{R(\theta) = \sum_{k=0}^{I_{\max}}\theta^k\mt{P}_k}$
we observe that $f_{n',n+1}(\theta)$ on the RHS of \eqref{eq:joint_LST_x_t} has the form $\sum_i\frac{\alpha_i}{(\theta+d'_i)^{\kappa_i}}$ with constants $\alpha_i$ and $\kappa_i\leq n$ due to the partial fraction decomposition of \eqref{eq:psipowerk}.
To obtain the Palm joint density $f^\circ(t_1,x_0)$ we calculate the inverse Laplace-Stieltjes transform of the RHS of \eqref{eq:joint_LST_x_t}.
Given the observation that $f_{n',n+1}(\theta)$ can be rewritten as $\sum_i\frac{\alpha_i}{(\theta+d'_i)^{\kappa_i}}$ we know that the inverse LST of $f_{n',n+1}(\theta)$ has the form $\sum_i c_i x_0^{\kappa_i} e^{-d'_i x_0}$ with constants $c_i$ and $\kappa_i\leq n$.
The same observation holds for $\tilde{f}_{n}(\nu)$ in \eqref{eq:joint_LST_x_t}, i.e., by calculating the inversion of \eqref{eq:LST_of_next_arrival_given_state_nprime_matrix} using the method above we finally obtain a partial fraction decomposition and subsequent inverse LST that has the form $\sum_j h_j t_1^{\varsigma_j} e^{-\tilde{d}_j t_1}$.



%




\ifCLASSOPTIONcaptionsoff
  \newpage
\fi




\def\bibfont{\footnotesize}
\bibliographystyle{IEEEtran}
\bibliography{bibliography} 

\balance

\end{document}